\documentclass[a4paper,11pt]{article}

\usepackage[a4paper, margin={1in, 1in}]{geometry}
\usepackage{mathtools}
\usepackage{algorithmic}
\usepackage{algorithm}
\usepackage{amsthm}
\usepackage{amssymb}
\usepackage{amsmath}
\usepackage{enumitem}
\usepackage{float}
\usepackage{etoolbox}
\usepackage{color}
\patchcmd{\thmhead}{(#3)}{#3}{}{}

\theoremstyle{plain}
\newtheorem{thm}{Theorem}[section]

\newtheorem{lem}[thm]{Lemma}
\newtheorem{cor}[thm]{Corollary}
\newtheorem{obs}[thm]{Observation}

\theoremstyle{definition}
\newtheorem{definition}[thm]{Definition}

\theoremstyle{remark}

\newtheorem*{remark*}{Remark}


\newcommand{\Par}{\text{Par}}
\newcommand{\Pre}{\text{Pre}}
\newcommand{\acc}{\text{acc}}
\newcommand{\rej}{\text{rej}}



\begin{document}

\title{Testing local properties of arrays}
\author{Omri Ben-Eliezer\thanks{Blavatnik School of
		Computer Science, Tel Aviv University, Tel Aviv 69978, Israel.
		{\tt omrib@mail.tau.ac.il}.}}
\date{}
\maketitle
\begin{abstract}
We study testing of local properties in one-dimensional and multi-dimensional arrays. A property of $d$-dimensional arrays $f:[n]^d \to \Sigma$ is \emph{$k$-local} if it can be defined by a family of $k \times \ldots \times k$ \emph{forbidden consecutive patterns}. This definition captures numerous interesting properties. For example, monotonicity, Lipschitz continuity and submodularity are $2$-local; convexity is (usually) $3$-local; and many typical problems in computational biology and computer vision involve $o(n)$-local properties.

In this work, we present a generic approach to test all local properties of arrays over any finite (and not necessarily bounded size) alphabet. We show that \emph{any} $k$-local property of $d$-dimensional arrays is testable by a simple canonical one-sided error non-adaptive $\varepsilon$-test, whose query complexity is $O(\epsilon^{-1}k \log{\frac{\epsilon n}{k}})$ for $d = 1$ and $O(c_d \epsilon^{-1/d} k \cdot n^{d-1})$ for $d > 1$. The queries made by the canonical test constitute sphere-like structures of varying sizes, and are completely independent of the property and the alphabet $\Sigma$. The query complexity is optimal for a wide range of parameters: For $d=1$, this matches the query complexity of many previously investigated local properties, while for $d > 1$ we design and analyze new constructions of $k$-local properties whose one-sided non-adaptive query complexity matches our upper bounds. For some previously studied properties, our method provides the first known sublinear upper bound on the query complexity.
\end{abstract}

\section{Introduction}
Property testing \cite{ GoldreichGoldwasserRon1998, RubinfeldSudan1996} is devoted to understanding how much information one needs to extract from an object in order to determine whether it satisfies a given property or is \emph{far} from satisfying the property. This active research area has seen many developments through the last two decades; see the recent book of Goldreich \cite{Goldreich2017} for a good introduction. The property testing notation we use here is standard, see Subsection \ref{subsec:notation} for the relevant definitions.

In this paper we focus on testing of local properties in structured data. The objects we consider are \emph{$d$-dimensional arrays}, where $d$ is a positive integer, viewed as a constant. 
A $d$-dimensional \emph{array} of \emph{width} $n$, or an \emph{$[n]^d$-array} in short, is a function $A \colon [n]^d \to \Sigma$ from the hypergrid $[n]^d$ to the \emph{alphabet}  $\Sigma$, where the alphabet $\Sigma$ is allowed to be any (arbitrarily large) finite set; we stress that the size of $\Sigma$ is usually \emph{not required} to be bounded as a function of the other parameters. For example, a \emph{string} is an $[n]^1$-array, and the commonly used RGB representation of images is basically an $[n]^2$-array over $\{0,1,\ldots, 255\}^3$, where the three values corresponding to each pixel represent the intensity of red, green and blue in it.

\subsection{Local properties}
\label{subsec:definition_local}
Korman, Reichman, and the author \cite{BenEliezerKormanReichman2017} recently investigated testing of the property of \emph{consecutive pattern freeness}, i.e., not containing a copy of some (predefined) ``forbidden'' consecutive subarray. Here, a $[k]^d$-array $S$ is a (consecutive) \emph{subarray} of an $[n]^d$-array $A$ in \emph{location}  $(i_1, \ldots, i_d) \in [n-k+1]^d$ if $A(i_1 + j_1 - 1, \ldots, i_d + j_d - 1) = S(j_1, \ldots, j_d)$ for any $j_1, \ldots, j_d \in [k]$.

Naturally, a more general follow-up question raised in \cite{BenEliezerKormanReichman2017} was the following: what can be said about testing of properties defined by a \emph{family of forbidden consecutive patterns}? As we shall see soon, many interesting properties of arrays (including a large fraction of the array properties that were previously investigated in the literature) can be characterized this way.     

With this in mind, we call a property \emph{local} if it can be characterized by a family of small forbidden consecutive patterns. Formally, a property $\mathcal{P}$ of $[n]^d$-arrays over an alphabet $\Sigma$ is \emph{$k$-local} (for $2 \leq k \leq n$) if there exists a family $\mathcal{F}$ of $[k]^d$-arrays over $\Sigma$ so that the following holds for any $[n]^d$-array $A$ over $\Sigma$:
\begin{center}
$A$ satisfies $\mathcal{P}$ $\iff$ None of the (consecutive) subarrays of $A$ is in $\mathcal{F}$.
\end{center}
For $\mathcal{P}$ as above, we sometimes write $\mathcal{P} = \mathcal{P}(\mathcal{F})$ to denote that $\mathcal{P}$ is defined by the forbidden family $\mathcal{F}$.

The main contribution of this work is a generic one-sided error non-adaptive framework to test $k$-local properties. In some cases, our method either matches or beats the best known upper bounds on the query complexity (although the running time might be far from optimal in general). We show the optimality of our method by proving a matching lower bound for non-adaptive one-sided tests, as well as a (weaker) lower bound for two-sided tests.    

In order to demonstrate the wide range of properties captured by the above definition, we now present various examples of properties that are $k$-local for small $k$, including some of the most widely investigated properties in the property testing literature, as well as properties from areas of computer science that were not systematically studied in the context of property testing. 
In what follows, the sum of two tuples $x = (x_1, \ldots, x_d), y = (y_1, \ldots, y_d)$ is defined as the tuple $(x_1 + y_1, \ldots, x_d + y_d)$; additionally, $e^i$ denotes the $i$-th unit vector in $d$ dimensions.

\begin{description}
	\item[Monotonicity] Perhaps the most thoroughly investigated property in the testing literature: see e.g. the  entries related to monotonicity testing in the Encyclopedia of Algorithms \cite{Chakrabarty2016, Raskhodnikova2016} and the references within. An $[n]^d$-array $A$ over an ordered alphabet $\Sigma$ is \emph{monotone} (non-decreasing) if $A(x) \leq A(y)$ for any $x = (x_1, \ldots, x_d)$ and $y = (y_1, \ldots, y_d)$ satisfying $x_i \leq y_i$ for any $i$. 
	Monotonicity is 2-local: an array $A$ is monotone if and only if there is no pair $x, x+e^i \in [n]^d$ so that $A(x) > A(x + e^i)$.
	
	\item[Lipschitz continuity] Another well-investigated property with connections to differential privacy  \cite{AwasthiJMR2016, BermanRaskhodnikovaYaroslavtev2014, ChakrabartyDJS2017, JhaRaskhodnikova2011}, an $[n]^d$-array $A$ is \emph{$c$-Lipschitz continuous} if $|A(x) - A(y)| \leq c \sum_{i=1}^{d} |y_i - x_i|$ for any $x, y \in [n]^d$. This condition holds iff $|A(x) - A(x+e^i)| \leq c$ for any $x, x+e^i \in [n]^d$, and thus Lipschitz continuity is also $2$-local.
	
	\item[Convexity] Discrete convexity is an important geometric property with connections to optimization and other areas \cite{BMR2015, BermanMurzabulatovRaskhodnikova2016, BlaisRY2014, ChenFreilichServedioSun2017, ParnasRR04, RademacherVempala2004}. A one-dimensional array $A$ is \emph{convex} if $\lambda A(x) + (1-\lambda)A(y) \geq A(\lambda x + (1- \lambda) y)$ for any $x, y \in [n]$ and $0 < \lambda < 1$ satisfying $\lambda x + (1- \lambda) y \in [n]$. 
	Convexity is 3-local for the case $d = 1$: an array $A:[n] \to \Sigma$ is convex if and only $A[x] - 2A[x+1] + A[x+2] \geq 0$ for any $x \in [n-2]$. 
	In higher dimensions, several different notions of discrete convexity have been used in the literature -- see e.g. the introductory sections of the book of Murota on discrete convex analysis \cite{Murota2003}. Two of the commonly used definitions, $M^{\sharp}$-convexity and $L^{\sharp}$-convexity, are $3$-local and $4$-local, respectively: see Theorems 4.1 and 4.2 in \cite{MoriguchiMurota2011}, where it is shown that both notions can be defined locally using slight variants of the Hessian matrix consisting of the partial discrete derivatives. Another common definition that is a natural variant of the continuous case states that convexity is equivalent to the positive semi-definiteness of the Hessian matrix; under this definition, convexity is $3$-local. 
	A strictly weaker notion of convexity, called \emph{separate convexity} \cite{BlaisRY2014}, is defined as follows: an $[n]^d$-array $A$ is separately convex if it is convex along each of the axes. Similarly to one-dimensional convexity, separate convexity is $3$-local for any $d$.
	
	\item[Properties of higher order derivatives] More generally, \emph{any} property of arrays that can be characterized by ``forbidden pointwise behavior'' of the first $k$ discrete derivatives \cite{BlaisRY2014} is $(k+1)$-local. Monotonicity (for $k=1$), Lipschitz continuity ($k=1$) and convexity ($k=2$) are special cases of such properties.
	
	\item[Submodularity] Another important property closely related to convexity \cite{BermanRaskhodnikovaYaroslavtev2014, BlaisBommireddi2017, ParnasRR04, SeshadhriVondrak2014}. Given $x = (x_1, \ldots, x_d), y = (y_1, \ldots, y_d) \in [n]^d$, define $x \land y = (\min(x_1, y_1), \ldots, \min(x_d, y_d))$ and $x \lor y = (\max(x_1, y_1), \ldots, \max(x_d, y_d))$. An $[n]^d$-array is \emph{submodular} if $A[x \land y] + A[x \lor y] \leq A[x] + A[y]$ for any $x, y \in [n]^d$. Submodularity is $2$-local: it is not hard to verify that submodularity is equivalent to the condition that $A(x) + A(x + e^i + e^j) \leq A(x + e^i) + A(x + e^j)$ for all $x$.
	
	\item[Pattern matching and computer vision]
	Tasks involving pattern matching under some limitations -- such as noise in the image, obstructed view, or rotation of elements in the image -- are at the core of computer vision and its applications. For example, the local property of not containing a good enough $\ell_1$-approximation of a given forbidden pattern is of practical importance in computer vision. Sublinear approaches closely related to property testing are known to be effective for problems of this type, see e.g. \cite{KormanRTA2017}.
	
	\item[Computational biology] Many problems in computational biology are closely related to one-dimensional pattern matching. As an example, a defensive mechanism of the human body against RNA-based viruses involves ``cutting'' a suspicious RNA fragment, if it finds one of a (small) family of short forbidden consecutive patterns in it, indicating that this RNA might belong to a virus. Thus, in order to generate fragments of RNA that are not destroyed by such defensive mechanisms (which is a basic task in computational biology),  understanding the process of ``repairing'' a fragment so that it will not contain any of the forbidden patterns is an interesting problem related to property testing. 
\end{description}

\subsection{Previous results on local properties}
\label{subsec:previous_results_local}
\paragraph{One-dimensional arrays}
A seminal result of Erg\"un et al. \cite{ErgunKannanKRV2000} shows that for constant $\epsilon$, monotonicity is $\epsilon$-testable over the line (that is, for one-dimensional arrays) using $O(\log n)$ queries over general alphabets. The non-adaptive one-sided error test proposed in \cite{ErgunKannanKRV2000} is based, roughly speaking, on imitating a binary search non-adaptively. It was shown by Fischer \cite{Fischer2004} that the above is tight even for two-sided error adaptive tests, proving a matching $\Omega(\log n)$ lower bound. Later on, Parnas, Ron and Rubinfeld \cite{ParnasRR04} and Jha and Raskhodnikova \cite{JhaRaskhodnikova2011} showed that the $O(\log n)$ upper bound on the non-adaptive one-sided query complexity also holds for convexity and Lipschitz continuity, respectively. For general $\epsilon$, the upper bound in \cite{JhaRaskhodnikova2011} is of the type $O(\epsilon^{-1} \log n)$; the same work also presents a matching lower bound of $\Omega(\log n)$ for the one-sided non-adaptive case, while $\Omega(\log n)$ lower bounds for two-sided non-adaptive tests of convexity, and more generally, monotonicity of the $\ell$-th derivative, are proved by Blais, Raskhodnikova and Yaroslavtsev  \cite{BlaisRY2014} using a communication complexity based approach \cite{BlaisBM2012}. Finally, a recent result of Belovs \cite{Belovs2018} refines the one-sided non-adaptive query complexity of monotonicity to $O(\epsilon^{-1}\log{\epsilon n})$. 

When the alphabet is binary (of size two), general positive results are known regarding the testability of local properties in one dimension. It follows from the testability of regular languages, established by Alon et al.\@ \cite{Alon2001}, that any $k$-local property is testable in $O(c(\mathcal{F}) \epsilon^{-1} (\log^3 (\epsilon^{-1})))$ queries, where $c(\mathcal{F})$ depends only on the family $\mathcal{F}$ of forbidden consecutive length-$k$ patterns defining the property. However, $c(\mathcal{F})$ can be exponential in $k$ in general. 

\paragraph{Multi-dimensional arrays}
Chakrabarty and Seshadhri \cite{ChakrabartySeshadhri2013} extended some of the above results to hypergrids, showing that a general class of so-called ``bounded derivative'' properties (all of which are $2$-local), including monotonicity and Lipschitz continuity as special cases, are all testable over $[n]^d$-arrays with $O(\epsilon^{-1} d \log n)$ queries. Another work by the same authors \cite{ChakrabartySeshadhri2014} shows a matching lower bound of $\Omega(\epsilon^{-1} d \log \epsilon n)$ for monotonicity, that holds even for two-sided adaptive tests, while the communication complexity approach of \cite{BlaisRY2014} gives a (non-adaptive, two-sided)  $\Omega(d \log n)$ lower bound for convexity, separate convexity and Lipschitz.

Submodularity is testable for $d=2$ with $O(\log^2 n)$ queries \cite{ParnasRR04}; However, no non-trivial upper bound on the query complexity is known for submodularity in the case $d > 2$ and convexity in the case $d > 1$ under the Hamming distance and over general alphabets (although \cite{BMR2015} proves constant-query testability for 2D convexity over a binary alphabet). Under $L_1$-distance and for any $d$, it was shown in \cite{BermanRaskhodnikovaYaroslavtev2014} that convexity in $[n]^d$-arrays is testable with number of queries that depends only on $d$.

\paragraph{Pattern freeness} In \cite{BenEliezerKormanReichman2017}, it was shown that the property of (consecutive) pattern freeness, for a single forbidden pattern, is testable with $O(c_d / \epsilon)$ queries for any $d$. The proof, however, requires multiple sophisticated combinatorial observations and does not seem to translate to the case of a family of forbidden patterns discussed here.  

\subsection{Our results}
\label{subsec:contributions}
In this work, we present a generic approach to test \emph{all} $k$-local properties of $[n]^d$-arrays. 
Among other consequences, a simple special case of our result in the one-dimensional regime shows that the abundance of properties whose query complexity is $\Theta(\log n)$ is not a coincidence: in fact, \emph{any} $O(1)$-local property of one-dimensional arrays is testable with $O(\log n)$ queries, using a canonical binary search like querying scheme.
 
On the other hand, we prove a lower bound for testing local properties in $d > 1$ dimensions, showing that the query complexity of our test is optimal (for fixed $d$) among non-adaptive one-sided tests, even when restricted to alphabets of size that is polynomial in $n^d$. We also prove a lower bound for non-adaptive two-sided tests.

\subsubsection{Upper bounds}
Our first main result is an upper bound on the number of queries required to test any $k$-local property of $[n]^d$-arrays non-adaptively with one-sided error. 
The test is \emph{canonical} in a strong sense: The queries it makes depend on $n, d, k$, and (relatively weakly) on $\epsilon$; they do not depend on $\mathcal{P}$ or the alphabet $\Sigma$. 
In other words, it makes the same type of queries for all $k$-local properties of $[n]^d$-arrays over any finite (and not necessarily bounded-size) alphabet.

\begin{thm}
	\label{thm:strong_upper_bound}
	Let $2 \leq  k \leq n$ and $d \geq 1$ be integers, and let $\epsilon > 0$.
	Any $k$-local property $\mathcal{P}$ of $[n]^d$-arrays over any finite (and not necessarily bounded size) alphabet has a one-sided error non-adaptive $\epsilon$-test whose number of queries is
	\begin{itemize}
		\item $O(\frac{k}{\epsilon} \cdot \log{\frac{\epsilon n}{k}})$ for $d = 1$.
		\item $O(c^d \frac{k}{\epsilon^{1/d}}  \cdot n^{d-1})$ for $d > 1$.
	\end{itemize}
	Here, $c > 0$ is an absolute constant. The test chooses which queries to make based only on the values of $n, d, k, \epsilon$, and independently of the property $\mathcal{P}$ and the alphabet $\Sigma$.
\end{thm}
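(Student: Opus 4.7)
The plan is to combine a combinatorial structural lemma about the distribution of forbidden patterns in an $\epsilon$-far array with a canonical hierarchical querying scheme whose structure depends only on $n, d, k, \epsilon$.

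\emph{Structural lemma.} If $A \colon [n]^d \to \Sigma$ is $\epsilon$-far from $\mathcal{P} = \mathcal{P}(\mathcal{F})$, I would show that for a suitable side length $\ell$, a constant fraction of the axis-aligned sub-cubes of side $\ell$ in a grid partition of $[n]^d$ are themselves locally $\Omega(\epsilon)$-far from $\mathcal{P}$. The argument partitions $[n]^d$ into blocks of side $\ell$ and bounds the global distance from above by the sum of local distances plus a boundary repair cost of order $k \cdot n^d / \ell$ for the $(k-1)$-thick interfaces between neighboring blocks. Once $\ell$ exceeds $k/\epsilon^{1/d}$ by a constant factor the boundary term is at most $\epsilon n^d/2$, so the local distances must sum to $\Omega(\epsilon n^d)$, forcing many blocks to be locally far. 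As an immediate consequence, each locally-far block of side $\ell$ contains $\Omega(\epsilon \ell^d/k^d)$ pairwise-disjoint forbidden $[k]^d$-subarrays.

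\emph{Canonical tests.} For $d=1$, the queries are organized into $\lceil \log(\epsilon n/k)\rceil$ scales. At scale $r$ the test picks $O(1/\epsilon)$ random length-$k$ windows whose starting position is uniform among the multiples of $2^r$, contributing $O(k/\epsilon)$ queries per scale and $O((k/\epsilon)\log(\epsilon n/k))$ in total. Applying the structural lemma at the scale $r^*$ with $2^{r^*}k$ of order $k/\epsilon$, a locally far block at that scale contains a positive density of forbidden $k$-windows starting on a multiple of $2^{r^*}$, so $O(1/\epsilon)$ random samples catch one with constant probability. For $d>1$, set $\ell \approx \epsilon^{1/d} n$ and arrange, in each of the $d$ coordinate directions, $O(n/\ell)$ parallel $(d-1)$-dimensional slabs of thickness $k$; the test queries every entry in the union of slabs, totaling $O(d \cdot (n/\ell) \cdot k n^{d-1}) = O(c^d k n^{d-1} \epsilon^{-1/d})$ queries. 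The structural lemma supplies a locally far block of side $\ell$, and by randomly offsetting the slab grid over its $\ell$ possible shifts, with constant probability one of the disjoint forbidden subarrays inside the block lies entirely in the queried union.

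\emph{Main obstacle.} I expect the hardest step to be the boundary-repair argument inside the structural lemma. The theorem must hold for an arbitrary $k$-local $\mathcal{P}$ over an arbitrary (possibly huge) alphabet, so one cannot rely on any ``safe'' symbol in $\Sigma$ or any structural assumption on $\mathcal{F}$. To push the argument through I would fix, once and for all, a single $\mathcal{P}$-satisfying template (assuming $\mathcal{P}$ is nonempty, else the theorem is vacuous) and overwrite the $(k-1)$-thick buffer at every block interface with the corresponding block of that template; this incurs exactly the needed $O(k n^d/\ell)$ overhead and eliminates any forbidden pattern that could span an interface. A secondary obstacle in the multi-dimensional case is aligning the structural lemma's scale with the canonical slab scheme without losing a logarithmic factor in $n$, which is precisely what the random shift of the slab grid is designed to achieve.
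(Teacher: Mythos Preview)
Your structural lemma is false, and this is where the whole plan breaks. Take $d=1$, $k=2$, $\Sigma=\{0,1\}$, and $\mathcal{P}$ = monotonicity (forbidden pattern $(1,0)$). The array $A=(1,\ldots,1,0,\ldots,0)$ with the drop at $n/2$ is $1/2$-far from $\mathcal{P}$, yet every sub-block not containing position $n/2$ is perfectly monotone, so the sum of local distances is $1$, not $\Omega(\epsilon n)$. The culprit is precisely the step you flagged as the main obstacle: overwriting the $(k-1)$-thick interfaces with a fixed $\mathcal{P}$-satisfying template does \emph{not} eliminate forbidden patterns spanning an interface, because the boundary between the template values and the independently repaired interior can itself form a forbidden pattern. (In the example above, pasting $0$'s at the interfaces leaves a $(1,0)$ wherever a block was repaired to all $1$'s.) For arbitrary $\mathcal{F}$ over arbitrary $\Sigma$ there is no universal way to ``patch'' boundaries, and without that your inequality (global distance $\le$ sum of local distances + boundary cost) simply does not hold.

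The paper's proof replaces your structural lemma with a different concept: a block $B$ is \emph{unrepairable} if \emph{no} assignment to its interior, keeping the boundary $\partial B$ fixed at the values $A$ already has there, avoids all forbidden patterns. Crucially, unrepairability is decidable from the boundary alone, and if every block is repairable then one can fix $A$ block-by-block without touching boundaries and without creating new forbidden patterns across interfaces. The test therefore queries \emph{block boundaries} (width-$(k{-}1)$ spheres) at a hierarchy of scales and runs an inference step to detect unrepairable blocks; it does not merely look for a forbidden $[k]^d$-pattern sitting inside a thin slab. Your tests, by contrast, can only reject when a forbidden pattern lies entirely inside a queried window or slab, and the monotonicity example shows this need not happen with any reasonable probability. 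The missing idea is exactly this boundary-based inference of unrepairability, which substitutes for the unavailable template repair.
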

Note that we are interested here in the domain where $n$ is large and $d$ is considered a constant. Thus, we did not try to optimize the $c^{d}$ term in the second bullet, seeing that it is negligible compared to $n^{d-1}$ anyway. 

\paragraph{Running time}
The main drawback of our approach is the running time of the test, which is high in general. After making all of its queries, our test runs an \emph{inference} step, where it tries to evaluate (by enumerating over all relevant possibilities) whether a violation of the property must occur in view of the queries made, and reject if this is the case.

Without applying any property-specific considerations, the running time of the inference step is of order $|\Sigma|^{O(n^d)}$. However, for various specific properties of interest, such as monotonicity and 1D-convexity, it is not hard to make the running time of the inference step of the same order of magnitude as the query complexity. Moreover, in one dimension we can use dynamic programming to achieve running time that is significantly better than the naive one, but still much higher than the query complexity in general: $O(|\Sigma|^{O(k)} n)$. This works for any $k$-local property in one dimension; see the last part of Subsection \ref{subsec:upper_bound_1D} for more details.

\paragraph{Proximity oblivious test}
Interestingly, the behavior of the test depends quite minimally on $\epsilon$, and it can be modified very slightly to create a proximity oblivious test (POT) for any $k$-local property.
The useful notion of a POT, originally defined by Goldreich and Ron \cite{GoldreichRon2011}, refers to a test that does \emph{not} receive $\epsilon$ as an input, and whose success probability for an input not satisfying the property is a function of the Hamming distance of the input from the property.

\begin{thm}
	\label{thm:proximity_oblivious}
Fix $d > 0$. Any $k$-local property $\mathcal{P}$ of $[n]^d$-arrays over any finite (but not necessarily bounded size) alphabet has a one-sided error non-adaptive proximity oblivious test whose number of queries is $O(k \log(n/k))$ if $d=1$ and $O(k n^{d-1})$ if $d > 1$. For any input $A$ not satisfying $\mathcal{P}$, the rejection probability of $A$ is linear (for fixed $d$) in the Hamming distance of $A$ from $\mathcal{P}$. 
\end{thm}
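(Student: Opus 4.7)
The plan is to realize the POT as the canonical test of Theorem~\ref{thm:strong_upper_bound} instantiated at a fixed constant $\epsilon_0 = \Theta_d(1)$ (say $\epsilon_0 = 1/2$). At this choice the query complexity becomes $O(k \log(n/k))$ for $d=1$ and $O(c^d k n^{d-1})$ for $d > 1$, matching the POT bounds after absorbing the $c^d$ factor into the fixed-$d$ constant. The one-sided error guarantee is inherited verbatim: the canonical test rejects only upon inferring a forbidden consecutive pattern from the observed queries, and this is independent of the $\epsilon$ it was nominally tuned for.

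To upgrade the ``reject with probability $\geq 2/3$ when $\epsilon_0$-far'' guarantee of Theorem~\ref{thm:strong_upper_bound} into a linear-in-distance rejection lower bound, I would exploit two structural ingredients of the canonical test: its multi-scale decomposition (the test's randomness splits across $O(\log(n/k))$ mutually independent scales for $d=1$, or $O(1)$ such scales for $d > 1$), and its translation invariance (at each scale, the query template is a uniformly random translate of a fixed sphere-like subset of $[n]^d$). Given $A$ at relative Hamming distance $\epsilon$ from $\mathcal{P}$, a greedy packing produces a family $\mathcal{W}$ of $|\mathcal{W}| = \Omega(\epsilon n^d / k^d)$ pairwise disjoint $[k]^d$-locations at which $A$ exhibits a forbidden consecutive pattern. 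By translation invariance, each $L \in \mathcal{W}$ is caught by the canonical test --- at the scale matching its size, either by direct querying or via local inference --- with a common probability $p = \Theta_d(k^d / n^d)$, so summing across $|\mathcal{W}|$ locations yields expected catches $\Omega_d(\epsilon)$.

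Since the random choices at different scales are independent, the expected catch count converts into a probability of at-least-one catch by a standard inclusion--exclusion estimate: $\Pr[\text{reject}] \geq 1 - \exp(-\Omega_d(\epsilon)) \geq \Omega_d(\epsilon)$ for all $\epsilon \in (0, \epsilon_0]$, which is the desired linear-in-distance guarantee. The main technical obstacle is making the ``catch at matching scale'' analysis precise --- specifically, identifying which scale catches which witness and handling witnesses that straddle multiple scales. I expect this to amount to a careful case analysis based on the witness's relative size, each case handled by a translation-invariant probabilistic estimate tailored to the canonical sphere-like query template.
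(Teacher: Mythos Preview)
Your choice of test is close to the paper's --- the POT there is a single ``basic step'' of the Theorem~\ref{thm:strong_upper_bound} test, run over all levels $0 \leq i \leq r$ rather than up to the $\epsilon$-dependent cutoff $r'$ --- but your analysis of the rejection probability has a genuine gap: the greedy packing claim is false. Take $d=1$, $k=2$, and let $\mathcal{P}$ be the $2$-local property ``all entries equal'' (forbidden patterns: all pairs $(a,b)$ with $a \neq b$). The array $A = (0,\ldots,0,1,\ldots,1)$ with $n/2$ zeros followed by $n/2$ ones is $(1/2)$-far from $\mathcal{P}$ yet contains exactly one forbidden consecutive pattern, at the single $0$--$1$ junction. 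Your packing gives $|\mathcal{W}| = 1$, not $\Omega(n)$, and your estimate then yields rejection probability $\Theta(1/n)$ rather than $\Omega(1)$. In general the number of disjoint $\mathcal{F}$-copies in $A$ is simply not controlled by the Hamming distance to $\mathcal{P}$.

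The paper sidesteps this by working not with $\mathcal{F}$-copies but with \emph{unrepairable blocks} at each scale: a $G_i$-block is a witness if its boundary values alone force an $\mathcal{F}$-copy in its closure, regardless of how the interior is filled. The structural input is Corollary~\ref{cor:many_entries_in_maximal_witnesses}: the maximal witnesses together cover at least $\epsilon_A(n/3)^d$ entries, so $\sum_i \delta_i \geq \epsilon_A/3^d$ where $\delta_i$ is the fraction of $G_i$-blocks that are maximal witnesses; since the basic step picks one block per level and rejects upon finding any witness, the rejection probability is at least $\epsilon_A/3^d$ directly --- no packing, no inclusion--exclusion. In the counterexample above, the lone $\mathcal{F}$-copy forces unrepairability of the containing block at every scale up through $r-1$, and the maximal witness sits at level $r-1$ (one of two blocks there), giving $\delta_{r-1} = 1/2$. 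Your ``matching scale'' intuition is aimed the right way, but the object whose scale matters is the unrepairable block, not the $\mathcal{F}$-copy.
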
 

One can run $O(c_d/\epsilon)$ iterations of the POT to obtain a standard one-sided error non-adaptive test. The query complexity is $O(k\epsilon^{-1} \log(n/k))$ for $d = 1$ and $O(c_d k \epsilon^{-1} n^{d-1})$ for $d > 1$, where $c_d > 0$ depends only on $d$. Thus, the POT-based test is sometimes as good as the test of Theorem \ref{thm:strong_upper_bound} (specifically, for $d=1$ it matches the above bounds for almost the whole range of $\epsilon$ and $k$). In any case, the multiplicative overhead of the POT-based test is sublinear in $1/\epsilon$ across the whole range.

\paragraph{Type of queries}
In one dimension, many of the previously discussed properties, including, for example, monotonicity and Lipschitz continuity, are testable in $O(\log n)$ queries (see Subsection \ref{subsec:previous_results_local} for a more extensive discussion). Previously known tests for monotonicity and Lipschitz continuity make queries that resemble a binary search in some sense: these tests query pairs of entries of distance $2^i$ for multiple choices of $0 \leq i \leq \log n$. 

Our test continues the line of works using querying schemes roughly inspired by binary search. The test queries structures that can be viewed, intuitively, as $L_{\infty}$-\emph{spheres} of different sizes in $[n]^d$. For this purpose, an $L_{\infty}$-sphere with radius $r$ and width $\ell$ in $[n]^d$ is a set $X_1 \times X_2 \times \ldots \times X_d \subseteq [n]^d$, where each $X_i$ is a union of intervals of the form $[a_i, a_{i}+1, \ldots, a_i+\ell-2, a_{i}+\ell-1] \cup [b_i-\ell+2, b_{i}-\ell+3, \ldots, b_i - 1, b_i]$, and $b_i - a_i \in \{r, r+1\}$ for any $i \in [d]$. More specifically, our test for $k$-local properties queries spheres with width $k-1$ and radius of order $2^i$ for different values of $i$. In the simple special case where $d=1$ and $k=2$, this is very similar to the querying scheme mentioned in the previous paragraph. 

\paragraph{Implications}
In one dimension, the query complexity of the test matches the best known upper bounds (and, in some regimes, refines the dependence on $\epsilon$) for several previously investigated properties including monotonicity, Lipschitz continuity and convexity. For monotonicity of $k$-th order derivatives, which is $(k+1)$-local, it proves the first sublinear upper bound on the query complexity: $O(k \log n)$; in comparison, the best known lower bound \cite{BlaisRY2014} is $\Omega(\log n)$. 

For pattern matching type properties in 1D arrays (including applications in computational biology and other areas), our approach gives a property- and alphabet-independent upper bound of $O(k \log n)$ on the query complexity, with essentially optimal dependence on $\epsilon$ as well. 
Previously known approaches for testing such properties, like the regular languages testing approach \cite{Alon2001}, yield tests whose query complexity is dependent on the family of forbidden patterns considered, whose size might be exponential in the locality parameter $k$. Our approach, on the other hand, requires an $O(\log n)$ ``overhead'', but its query complexity is independent of the size of the forbidden family discussed. Instead, the dependence in $k$ is linear. 

In multiple dimensions, our approach is far from tight for well-understood properties such as monotonicity and Lipschitz continuity, whose query complexity is known to be $\Theta(d \log n)$ (in comparison, our approach yields an $O(n^{d-1})$ type bound). However, for testing of other properties like convexity (for $d > 1$) and submodularity (for $d > 2$) in $[n]^d$-arrays, no non-trivial upper bounds on the query complexity are known over general alphabets, so our upper bound of $O(n^{d-1})$ is the first such bound. While we do not believe this bound is tight in general, this might be a first step towards the development of new tools for efficiently testing such properties.

\paragraph{Sketching for testing} The fact that the queries made are completely independent of the property  suggests the following sketching technique allowing for ``testing in retrospect'': Given $\epsilon$ and $k$ in advance, we make all queries of the generic $\epsilon$-test for $k$-local properties in ``real time'', and store them for postprocessing. This is suitable, for example, in cases where we have limited access to a large input for a limited amount of time (e.g. when reading the input requires specialized expensive machinery), but the postprocessing time is not an issue. Note that for this approach we do not need to know the property of interest in advance.

\subsubsection{Lower bounds}
Our next main result is a tight lower bound for non-adaptive one-sided error testing of local properties, that applies for any $d$, and is tight for any fixed $d$ satisfying $d > 1$.

\begin{thm}[One-sided tests]
\label{thm:main_lower_bound}
Let $d \geq 1$ and $n \geq k \geq 2$ be integers, and let $d/n < \epsilon < 1$.
There exists a $k$-local property $\mathcal{P}$ of $[n]^d$-arrays over an alphabet $\Sigma$ of size $n^{O(d)}$, so that any non-adaptive one-sided error $\epsilon$-test for $\mathcal{P}$ requires $\Omega \left(\min \left\{ \frac{k}{d\epsilon^{1/d}}  \cdot n^{d-1}, n^d \right\} \right)$ queries.
\end{thm}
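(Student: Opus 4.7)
The proof is a standard Yao's-principle reduction to a distributional lower bound. Since a one-sided tester for a $k$-local property can reject only after observing the full contents of some $k \times \cdots \times k$ forbidden pattern location, it suffices to exhibit a $k$-local property $\mathcal{P}$ over an alphabet of size $n^{O(d)}$ together with a distribution $\mathcal{D}$ supported on $\epsilon$-far arrays, with the property that for any non-adaptive query set $Q$ of size below the target threshold, the probability over $A \sim \mathcal{D}$ that $Q$ contains a complete forbidden $k^d$-block of $A$ is less than $1/3$.

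To construct $\mathcal{P}$, I would use an alphabet $\Sigma$ whose symbols encode rigid position labels together with a payload. The forbidden family $\mathcal{F}$ consists of $k^d$-blocks whose position labels fail a local consistency check; this makes the legal arrays a rigid low-dimensional family (parameterized by a few global offsets) while any label perturbation generates a cluster of localized violations, as needed for $k$-locality to bite. The hard distribution $\mathcal{D}$ is defined by partitioning $[n]^d$ into cells of side length $L = \Theta(k/\epsilon^{1/d})$ and, in each cell independently, planting a single defective $k^d$-block at a uniformly random in-cell location on top of a legal background. Rigidity of the label structure forces $\Omega(\epsilon n^d)$ entry changes to restore $\mathcal{P}$-legality, establishing $\epsilon$-farness. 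The regime where the $\min$ in the statement attains $n^d$ corresponds to tiny $\epsilon$ and is handled separately by a direct counting argument.

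The technical core --- and the main obstacle --- is a per-cell discrete geometric estimate: for any cell $C$, bound the number of $k^d$-blocks entirely covered by $Q \cap C$ in terms of $|Q \cap C|$ and the ``shape'' of $Q \cap C$ inside $C$. The naive estimate (each covered block uses $k^d$ distinct queries but blocks may share queries, giving the crude bound ``covered blocks $\le |Q \cap C|$'') is far too weak: summed across cells, it only yields an $\Omega(1/\epsilon)$ query lower bound, missing the required $n^{d-1}\epsilon^{(d-1)/d}$ factor. The sharper estimate I would pursue is a Loomis--Whitney / Brunn--Minkowski-style inequality showing that covering many $k^d$-blocks inside a cell forces $Q \cap C$ to occupy a genuinely $d$-dimensional region whose ``width'' in each axis scales with $k$; aggregated across the $(n/L)^d$ cells and optimized over the allocation $q_C = |Q \cap C|$, this should force $Q$ to spread over $\Omega(n^{d-1})$ independent directions, each of which contributes $\Omega(k/\epsilon^{1/d})$ queries. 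Matching constants between the cell aggregation step and the per-cell geometric inequality is the delicate combinatorial task that delivers the sharp $kn^{d-1}/(d\epsilon^{1/d})$ bound.
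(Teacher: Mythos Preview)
Your approach diverges from the paper's and, as written, has a structural gap that I do not see how to close.

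The opening premise --- that a one-sided tester for a $k$-local property can reject only after reading an entire $k\times\cdots\times k$ forbidden block --- is false for $k$-local properties in general. Monotonicity is $2$-local, yet a one-sided tester rejects after seeing $A(1)=5$ and $A(100)=3$: these two far-apart values already certify that no monotone completion exists. What a one-sided tester actually needs is that the queried values be \emph{inconsistent} with every array in $\mathcal{P}$, which is the paper's notion of unrepairability, not the much stronger ``full forbidden block observed'' condition. Your construction therefore has to be engineered so that any partial view of a planted defect can be completed to a legal array; you assert rigidity of position labels, but rigidity cuts the wrong way --- a single queried entry whose label is inconsistent with its position already certifies a violation.

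More fundamentally, there is a tension your plan does not resolve: defects that genuinely require $\Omega(k^d)$ queries to witness (so that the Loomis--Whitney step has a chance) tend to be repairable by changing $O(1)$ entries, which destroys $\epsilon$-farness; conversely, defects that require many entry changes to repair tend to leak information from a single query. The paper sidesteps this by building a property that is \emph{global in spirit but $2$-local in syntax}: every entry carries a self-pointer, a pointer to a common ``center of gravity'', and a small data payload; the data is required to flow along prescribed paths from the floor and ceiling toward the center, and the two centers must agree. Far instances disagree at the center, and repairing the disagreement forces rewriting $\Omega(n)$ entries along each of $\Omega(n^{d-1})$ paths. The one-sided argument is then: a tester cannot certify disagreement until its queries intercept $\Omega(n^{d-1})$ of the $2n^{d-1}$ floor/ceiling paths, and a short calculation shows each query intercepts $O(d)$ paths in expectation over the random center, giving $\Omega(n^{d-1}/d)$. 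The dependence on $k$ comes from thickening the floor/ceiling to width $k/2$ and letting data jump in steps of $k/2$; the dependence on $\epsilon$ comes from embedding this construction in a random sub-cube of side $\Theta(\epsilon^{1/d}n)$; and the small alphabet comes from replacing the set-valued payload by a pair of integer counters in $\{0,\dots,n^{d-1}\}$.
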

Note that the size of the alphabet in Theorem \ref{thm:main_lower_bound} is only polynomial in the input size.
We also prove a lower bound for non-adaptive two-sided tests; here the dependence in $|\Sigma|$ is exponential.
\begin{thm}[Two-sided tests]
\label{thm:two_sided_lower_bound}
Let $d \geq 1$ and $n \geq k \geq 2$ be integers, and let $d/n < \epsilon < 1$. There exists a $k$-local property $\mathcal{P}$ of $[n]^d$-arrays over an alphabet $\Sigma$ of size $2^{O(n^d)}$, so that any non-adaptive two-sided error $\epsilon$-test for $\mathcal{P}$ requires $\Omega \left(\min \left\{ \frac{\sqrt{k}}{d\epsilon^{(d+1)/2d}}  \cdot n^{(d-1)/2}, n^d \right\} \right)$ queries.
\end{thm}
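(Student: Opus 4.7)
The plan is to apply Yao's minimax principle. I construct distributions $\mathcal{D}_\text{yes}$ on $\mathcal{P}$ and $\mathcal{D}_\text{no}$ on arrays $\epsilon$-far from $\mathcal{P}$ so that their projections onto any fixed set $Q$ of $q$ coordinates have total variation distance below $1/3$ whenever $q$ is below the stated bound; averaging over the coins of a non-adaptive randomized tester then completes the proof.

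The construction exploits the alphabet budget $|\Sigma|=2^{O(n^d)}$ by encoding each symbol as a pair (position label, payload bit), and partitions $[n]^d$ into disjoint ``puzzles'' each of volume $s \approx n\epsilon^{1/d}/k$, shaped as an axis-aligned box of width $k$ in at least one direction. The property $\mathcal{P}$ is defined by $k\times\cdots\times k$ forbidden patterns that enforce (i) local consistency of labels with their canonical positions and (ii) a per-puzzle XOR constraint on the payload bits; the global parity is rendered $k$-local by carrying a running XOR inside the labels and propagating it through overlapping $k$-windows along an axis-aligned traversal. Under $\mathcal{D}_\text{yes}$ all labels are canonical and payloads are uniform conditioned on every puzzle's parity being zero, so the array lies in $\mathcal{P}$. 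Under $\mathcal{D}_\text{no}$ an independent $\epsilon$-fraction of puzzles is ``corrupted'' by resampling every symbol in the puzzle uniformly from $\Sigma$; because canonical labels are pinned by position, repairing a corrupted puzzle requires $\Omega(s)$ Hamming changes, so the array is $\Omega(\epsilon)$-far from $\mathcal{P}$ with high probability.

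For indistinguishability, single-query marginals are uniform over $\Sigma$ under both distributions — the fresh random labels in corrupted puzzles preserve uniformity in $\mathcal{D}_\text{no}$ — and the joint distribution of any two queries agrees between the models unless they fall in a common corrupted puzzle. A birthday union bound estimates the probability of such a collision by $O(q^2 R s^2/n^{2d})$ with $R \approx k n^{d-1}\epsilon^{(d-1)/d}$ the expected number of corrupted puzzles, which simplifies to $O(q^2 \epsilon^{(d+1)/d}/(k n^{d-1}))$; requiring this to be $o(1)$ forces $q = \Omega(\sqrt{k}\,n^{(d-1)/2}/\epsilon^{(d+1)/(2d)})$. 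The main obstacle is arranging all three structural requirements simultaneously: $k$-locality of the parity invariant across multiple coordinate directions, exact uniformity of single-query marginals, and the $\Omega(s)$-farness lower bound for corrupted puzzles. The last is the most delicate, requiring a rigidity argument that any valid labeling agreeing with a corrupted puzzle on a nontrivial subset forces a global relabeling of the entire puzzle and hence costs a constant fraction of its volume in Hamming changes; the minimum with $n^d$ in the theorem merely accounts for the degenerate regime in which the birthday bound exceeds the trivial one.
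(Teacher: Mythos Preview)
Your construction has a structural contradiction that breaks the indistinguishability claim. You need the position labels for rigidity: the argument that a corrupted puzzle is $\Omega(s)$-far from $\mathcal{P}$ relies on the fact that in any valid array the label at $x$ is pinned to the canonical value for $x$. But then under $\mathcal{D}_{\text{yes}}$ the label component at every $x$ is deterministic, not uniform over $\Sigma$; whereas under $\mathcal{D}_{\text{no}}$ a query landing in a corrupted puzzle sees a uniformly random label. So the single-query marginals differ, and a tester that simply checks whether the observed label is canonical distinguishes the two distributions with $O(1/\epsilon)$ queries. If you drop the labels to rescue uniformity, the farness collapses: a uniformly random parity string is at Hamming distance $O(1)$ from a zero-parity string, so corrupted puzzles are no longer far from $\mathcal{P}$. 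Your ``main obstacle'' paragraph notices the tension but does not resolve it; the rigidity-via-labels and uniform-marginals requirements are mutually exclusive in this design.

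There is also a problem with the birthday estimate. The expression $O(q^2 R s^2/n^{2d})$ is the probability that two \emph{uniformly random} points land in a common corrupted puzzle, but in a Yao argument the query set $Q$ is fixed and adversarial. For a fixed partition into puzzles, either two queries share a puzzle or they do not; there is no ``collision probability'' to bound unless you also randomize the partition (which you do not mention), and even then the tester can defeat this by spreading queries.

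The paper takes a completely different route. The property encodes, at each entry, a \emph{set} of identifiers flowing along prescribed paths toward a randomly placed ``center of gravity''; the alphabet genuinely uses the $2^{O(n^{d-1})}$ budget to store these sets. The yes/no distributions differ only in whether the union of floor singletons equals or is disjoint from the union of ceiling singletons. The birthday paradox is then applied to \emph{value} collisions between revealed floor and ceiling identifiers (not to query-location collisions): the number of identifiers a query set $Q$ can reveal is bounded by $|I(Q,A)|$, whose expectation is $O(d|Q|)$ thanks to the random center location, and distinguishing requires $|I(Q,A)|^2 \gtrsim n^{d-1}$. Farness is obtained because fixing a single floor singleton forces changes along its entire path of length $\Theta(n)$. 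This avoids the rigidity/uniformity tension: single queries see symmetrically distributed sets in both cases, yet the global structure makes the no-instances far.
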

For fixed $d > 1$, the lower bound for one-sided tests matches the upper bound from Theorem \ref{thm:strong_upper_bound} across the whole range of $\epsilon$ and $k$. For $d=1$ the bound obtained here is $\Omega(k / \epsilon)$, which is tight up to a $\log{n}$ factor. Note the threshold behavior occurring at $k / \epsilon^{1/d} = \Theta(n)$: When $k / \epsilon^{1/d} = o(n)$, the upper bound of Theorem \ref{thm:strong_upper_bound} implies that any $k$-local property is $\epsilon$-testable with a sublinear number of queries, while for $k / \epsilon^{1/d} = \Omega(n)$, the property of Theorem \ref{thm:main_lower_bound} requires $\Omega(n^d)$ non-adaptive one-sided queries to test.

From Theorem \ref{thm:two_sided_lower_bound} we conclude that the improvement in query complexity obtained by two-sided error non-adaptive tests is at most quadratic in the worst case; specifically, there exists a $2$-local property requiring $n^{\Omega(d-1)}$ queries to test by two-sided non-adaptive tests.
\subsection{Proof ideas and techniques}
Here we present the main ideas of our proofs in an informal way, starting with the upper bound. For simplicity, we stick to the one-dimensional case, and assume that $\epsilon$ is fixed and $k = o(n)$. 

\subsubsection{Upper bound for 1D}
\label{subsec:upper_bound_1D}
Suppose that $\mathcal{P} = \mathcal{P}(\mathcal{F})$ is a $k$-local property of $[n]^1$-arrays $A$ over an alphabet $\Sigma$, defined by the forbidden family $\mathcal{F}$. Let $S$ be a consecutive subarray of $A$ of length at least $2k-2$.
The \emph{boundary} of $S$ consists of the first $k-1$ elements and the last $k-1$ elements of $S$, and all other elements of $S$ are its \emph{interior}.
We call $S$ \emph{unrepairable} if one cannot make the array $S$ satisfy the property $\mathcal{P}$ without changing the value of at least one element in its boundary. Otherwise, $S$ is \emph{repairable}. Observe the following simple facts.
\begin{itemize}
	\item It suffices to only query the boundary elements of $S$ in order to determine whether $S$ is unrepairable. 
	\item If $S$ is unrepairable, then $A$ does not satisfy $\mathcal{P}$.
	\item If $S$ is repairable, then we can delete all forbidden patterns from $S$ by modifying only entries in its interior, without creating any new copies of forbidden patterns in $A$.
\end{itemize} 
We call the process of understanding whether $S$ is unrepairable using only its boundary elements \emph{inference}. Note that the inference step does not make any additional queries.

\paragraph{A simple sublinear test}
A first attempt at a generic test for local properties is the following: we query $\Theta(\sqrt{n})$ intervals in $[n]$, each containing exactly $k-1$ consecutive elements, including the intervals $\{1, \ldots, k-1\}$ and $\{n-k+2, \ldots, n\}$, where the distance between each two neighboring intervals is $\Theta(\sqrt{n})$. A \emph{block} is a subarray consisting of all elements in a pair of neighboring intervals and all elements between them.
The crucial observation is that at least one of the following must be true, for any array $A$ that is $\epsilon$-far from $\mathcal{P}$ (recall that $\epsilon$ is fixed).
\begin{itemize}
	\item At least one of the blocks is unrepairable.
	\item At least $\Omega(\sqrt{n})$ of the blocks do not satisfy $\mathcal{P}$.
\end{itemize}
Indeed, if the first condition does not hold, then one can make $A$ satisfy $\mathcal{P}$ by only changing elements in the interiors of blocks that do not satisfy $\mathcal{P}$. Seeing that $A$ is $\epsilon$-far from $\mathcal{P}$ and that we do not need to modify elements in the interiors of blocks that satisfy $\mathcal{P}$, this implies that at least $\Omega(\sqrt{n})$  of the blocks do not satisfy $\mathcal{P}$.

Now we are ready to present the test: We query all $O(k\sqrt{n})$ elements of all intervals, and additionally, all $O(\sqrt{n})$ elements of $O(1)$ blocks. Querying all elements of all intervals suffices to determine (with probability $1$) whether one of the blocks is unrepairable. If $A$ is $\epsilon$-far from $\mathcal{P}$ and does not contain unrepairable blocks, querying $O(1)$ full blocks will catch at least one block not satisfying $\mathcal{P}$ with constant probability, as desired. 

For more details, see Section \ref{sec:weak_test} and the preliminary Section \ref{sec:grid_structure} that prepares the required infrastructure.

\paragraph{The optimal test}
Improving the query complexity requires us to construct a system of \emph{grids} -- which are merely subsets of $[n]$ -- inspired by the behavior of binary search. In comparison, the approach of the previous test is essentially to work with a single grid.

The first (and coarsest) grid contains only the first $k-1$ elements and the last $k - 1$ elements of $[n]$. In other words, it is equal to $\{1, \ldots, k-1, n-k+2, \ldots, n\}$.
The second grid refines the first grid -- that is, it contains all elements of the first grid -- and additionally, it contains $k-1$ consecutive elements whose center is $n/2$ (whenever needed, rounding can be done rather arbitrarily). 
We continue with the construction of grids recursively: To construct grid number $i+1$, we take grid number $i$ and add $k-1$ elements in the middle of each block of grid $i$ (blocks are defined as before). Note that the length of blocks is roughly halved with each iteration. We stop the recursive construction when the length of all of the blocks becomes no bigger than $ck$, where $c \geq 2$ is an absolute constant. 

For each block $B$ in grid number $i > 1$, we define its \emph{parent}, denoted $\Par(B)$, as the unique block in interval $i-1$ containing it. A block $B$ in the system of grid is \emph{maximally unrepairable} if it is unrepairable, and all blocks (of all grids in the system) strictly containing it are repairable. It is not hard to see that different maximally unrepairable blocks have disjoint interiors.

The main observation now is that in order to make $A$ satisfy $\mathcal{P}$, it suffices to only modify entries in the interiors of parents of maximally unrepairable blocks. If $A$ is $\epsilon$-far from $\mathcal{P}$, then the total length of these parents must therefore be $\Omega(n)$ (for constant $\epsilon$). However, since the length of $\Par(B)$ is roughly twice the length of $B$, we conclude that the total length of all maximally unrepairable blocks is $\Omega(n)$. 

With this in hand, it can be verified that the following test has constant success probability. For each grid in the system, we pick one block of the grid uniformly at random, and query all entries of its boundary. Additionally, for the finest grid (whose block length is $O(k)$), we also query all interior elements of the picked block.

For more details, see Section \ref{sec:strong_test} (which builds on the infrastructure of Section \ref{sec:grid_structure}).

\paragraph{Running time in 1D}
We now show that the running time of the inference step for a block of length $m$ is $m |\Sigma|^{O(k)}$. Summing over all block lengths, this would imply that the total running time of the test is $n |\Sigma|^{O(k)}$. 

The proof uses dynamic programming. Let $S$ be an array of length $m$ over $\Sigma$, and assume that $S(1), S(2), \ldots, S(k-1)$ and $S(m-k+2), \ldots, S(m)$ are all known. For each ``level'' from 1 to $m-k+1$, we keep a boolean predicate for each of the $|\Sigma|^{k}$ possible patterns of length $k$ over $\Sigma$. These predicates are calculated as follows.

\begin{itemize}
	\item In the first level, the predicate of $\sigma = (\sigma_1, \ldots, \sigma_k)$ evaluates to TRUE if $S(1) = \sigma_1, \ldots, S(k-1) = \sigma_{k-1}$, and additionally, $\sigma \notin \mathcal{F}$, that is, $\sigma$ is not a forbidden pattern. Otherwise, the predicate of $\sigma$ is set to FALSE.
	\item For $i = 2$ to $m - k + 1$, the predicate of $\sigma = (\sigma_1, \ldots, \sigma_k)$ in level $i$ evaluates to TRUE if and only if
	\begin{enumerate}
		\item $\sigma \notin \mathcal{F}$.
		\item there exists $\sigma' = (\sigma'_0, \sigma_1, \ldots, \sigma_{k-1})$ that evaluates to TRUE in level $i-1$.
	\end{enumerate}
	\item Finally, the predicates in level $m - k + 1$ are modified as follows: for all $\sigma = (\sigma_1, \ldots, \sigma_k)$ so that $\sigma_j \neq S(m-k+j)$ for some $j \geq 2$, we set the predicate of $\sigma$ to FALSE.
\end{itemize}
It is not hard to see that $S$ is unrepairable if and only if all predicates at level $m-k+1$ are FALSE. The running time is $O(m |\Sigma|^{cd})$ for a suitable constant $c > 0$.

\paragraph{Generalization to higher dimensions}
The generalization to higher dimensions is relatively straightforward; the main difference is that the boundary of blocks now is much larger: blocks of size $m \times \ldots \times m$ have boundary of size $O(kdm^{d-1})$. Thus, essentially the same proof as above (with suitable adaptations of the definitions) yields a test with query complexity $O(kdn^{d-1})$ for constant $\epsilon$. For the running time, we can no longer use dynamic programming; using the naive approach of enumerating over all possible interior elements of a block, we get that the inference time for a block of size $m \times \ldots \times m$ is $|\Sigma|^{O(m^d)}$, making the total running time of the test $|\Sigma|^{O(n^d)}$.

\subsubsection{Lower bound}
The property $\mathcal{P}$ underlying our lower bound construction, fully described in Section \ref{sec:lower_bound}, consists of $[n]^d$-arrays $A$ over $\Sigma$ satisfying all of the following properties. Here we provide a construction over alphabet size $2^{O(n^d)}$, but in Section \ref{section:lower_shrinking} we show how a simple modification of the property can be conducted in order to decrease the size of the required alphabet to $n^{O(d)}$ for the proof of Theorem \ref{thm:main_lower_bound} (unfortunately, for the proof of Theorem \ref{thm:two_sided_lower_bound} this modification does not work).
\begin{itemize}
	\item The alphabet $\Sigma$ is of the form $[n]^d \times [n]^d \times 2^{[2n^{d-1}]}$, where $2^X$ is the \emph{power set} of a set $X$. The value of $A$ in entry $x \in [n]^d$ is represented as a tuple $A_1(x), A_2(x), A_3(x)$.  We view $A_1(x), A_2(x)$ as \emph{pointers} emanating from $x$.
	\item For every $x \in [n]^d$ we require $A_1(x) = x$. That is, $A_1$ points to the location of the element itself.
	\item There exists a special location $\ell = (\ell_1, \ldots, \ell_d) \in [n]^d$ so that all $x \in [n]^d$ point to $\ell$ with their second pointer, that is, $A_2(x) = \ell$. We call this location the \emph{lower center of gravity}. We define an \emph{upper center of gravity} as $u = (\ell_1+1, \ell_2, \ldots, \ell_d)$.
	\item A \emph{floor} entry $x = (x_1, \ldots, x_d) \in [n]^d$ satisfies $x_1 = 1$ and a \emph{ceiling} entry satisfies $x_1 = n$. For each such floor or ceiling entry $x$, we pick $A_3(x)$ to be a singleton (i.e., a set with one element). 
	\item For each floor element $x$, there exists a path $\Gamma_x$ from $x$ to the lower center of gravity $\ell$. Similarly, for each ceiling element $y$, there is a path $\Gamma_y$ directed towards the upper center of gravity, $u$. In both cases, the path is of length $O(nd)$. The structure of the path depends only on its start and end points (so depends only on $x$ and $\ell$ in the first case, and $y$ and $u$ in the second case).
	\item The $A_3$-data ``flows'' to the center of gravity through paths. Formally, the $A_3$-set of each entry $y$ that is not a floor or ceiling entry is required to be equal to the union $\bigcup_{x \colon y \in \Gamma_x} A_3(x)$. In other words, the data in each location in $[n]^d$ is an ``aggregation'' of the data flowing in all paths that intersect it.
	\item Finally, we require that $A_3(u) = A_3(\ell)$. 
\end{itemize}

While $\mathcal{P}$ was defined above in global terms, we show that it is actually a $2$-local property, that is, all conditions specified here can be written in a $2$-local way.

To prove the lower bound, we follow Yao's minimax principle \cite{Yao77}, defining a distribution of arrays satisfying $\mathcal{P}$, and a distribution of arrays which are $\Omega(1)$-far from satisfying $\mathcal{P}$, so that a large number of queries is required to distinguish between the distributions.
  
As positive examples, we take a collection of arrays $A$ satisfying the property, and require that all the singletons in the floor are pairwise disjoint. 
For negative examples (that are $1/4$-far from $\mathcal{P}$), we consider a collection of arrays satisfying all of the above requirements other than the last. Instead, all singletons in the floor and the ceiling are pairwise disjoint (so in particular, $A_3(\ell) \cap A_3(u) = \emptyset$).

We show that for any given $x \in [n]^d$, the expected size of $A_3(x)$ over each of the distributions is $O(d)$. For the one-sided error case, it is shown that one needs to know the values of at least $\Omega(n^{d-1})$ singletons to be able to distinguish between positive and negative examples with one sided error, implying that $\Omega(n^{d-1} / d)$ queries are required to reject negative examples with constant probability. 

For the two sided error case, the argument is inspired by the birthday paradox. Very loosely speaking, it follows from the fact that, given two unknown unordered sets $A$ and $B$ of size $n$, one has to make $\Omega(\sqrt{n})$ queries to distinguish between the case that $A = B$ and the case that $A \cap B = \emptyset$.

\subsection{Other related work}
\label{subsec:related}

This subsection complements Subsection \ref{subsec:previous_results_local}, presenting other related previous works that were not mentioned above.

\paragraph{General results in property testing}
This paper adds to the growing list of general characterization results in property testing of strings, images, and multi-dimensional arrays; see \cite{AlonBenEliezerFischer2017, BenEliezerFischer2018} and the references within for characterization-type results in these domains, mostly over a fixed size alphabet. 
In particular, for strings, it was shown by Alon et al. \cite{Alon2001} that any local property over a fixed size alphabet is constant-query testable, and this paper shows that an overhead of at most $O(\log n)$ is required when the alphabet size is unbounded. 

\paragraph{Hyperfiniteness}
A graph is hyperfinite if, roughly speaking, it can be decomposed into constant size connected components by deleting only a small constant fraction of the edges.
Newman and Sohler \cite{NewmanSohler2013} investigated the problem of testing in hyperfinite graphs,
showing that any property of hyperfinite bounded degree 
graphs is testable with a constant number of queries. While the graph with which we (implicitly) work -- the hypergrid graph, whose vertices are in $[n]^d$ and two vertices are neighbors if they differ by $1$ in one coordinate -- is a hyperfinite bounded degree graph (for constant $d$), the results of \cite{NewmanSohler2013} are incomparable to ours. Indeed, in our case the vertices are inherently ordered, and it does not make sense to allow adding edges between vertices that are not neighbors (as entries of $[n]^d$), unlike the case in \cite{NewmanSohler2013}, where one may add or remove edges arbitrarily between any two vertices. Still, the hyperfiniteness of our graph seems to serve as a major reason that local properties have sublinear tests.

\paragraph{Block tests for image properties}
The works of Berman, Murzabulatov and Raskhodnikova \cite{BMR2015} and Korman, Reichman, and the author \cite{BenEliezerKormanReichman2017} on testing of image properties (that is, on visual properties of 2D arrays) show that tests based on querying large consecutive blocks are useful for image property testing.
In this work, the general queries we make are quite different: we query the \emph{boundaries} of blocks of different sizes, so the queries are \emph{spherical}, in the sense that a block can be seen as a ball in the $L_{\infty}$-metric on vectors in $[n]^d$, while its boundary can be be seen as the (width-$k$) sphere surrounding this ball. This introduces a new type of queries shown to be useful for image property testing.

\subsection{Discussion}
\label{subsec:discussion}
\paragraph{Small alphabets}
The results in this work are alphabet independent, and in particular, they work for alphabets over any size.
An intriguing direction of research is to understand whether one can obtain more efficient general testability results for local properties of multi-dimensional arrays over smaller alphabets; this line of research has been conducted for specific properties of interest, like monotonicity and convexity \cite{Belovs2018, PalavoorRV2018}. Note that the one-sided non-adaptive lower bound we prove here can be adapted to yield a $|\Sigma|^{\Omega(1)}$ lower bound for testing local properties over alphabets $\Sigma$ of size smaller than $n^d$.

The most interesting special case is that of constant-sized (and in particular, binary) alphabets. Here, no lower bounds that depend on $n$ are known. 
For the case $d = 1$, it is known that all $O(1)$-local properties are constant query testable; this follows from a result of Alon et al. \cite{Alon2001}, who showed that any regular language is constant-query testable.
However, it is not known whether an analogous statement holds in higher dimensions. That is, for any $d > 1$,
the question whether all $k$-local properties of $[n]^d$-arrays over $\{0,1\}$ are $\epsilon$-testable with query complexity that depends only on $d$, $k$, and $\epsilon$, first raised in \cite{BenEliezerKormanReichman2017} (see also \cite{AlonBenEliezerFischer2017}), remains an intriguing open question. We believe that positive results in this front might also shed light on the question of obtaining more efficient inference for large classes of properties, especially over small alphabets.  

\paragraph{Does adaptivity help?} 
This work does not provide any lower bounds for adaptive tests, and it will be interesting to do so; previously investigated properties likes monotonicity yield an $\Omega(d \log n)$ lower bound \cite{BlaisRY2014, ChakrabartySeshadhri2014}, and we believe that ``data flow'' type properties, somewhat similar to our lower bound constructions, can provide instances of $2$-local properties that require at least $n^c$ queries, for some constant $c \leq 1$, for the adaptive two-sided case.

However, it is not clear whether better lower bounds (even bounds of the type $\Omega(n^{1+c})$) exist. It will be very interesting to prove better upper and lower bounds for testing local properties. Our conjecture is that \emph{any} $2$-local property is testable in $n^{1+o(1)} g(d)$ queries (where $g(d)$ depends only on $d$), but proving a statement of this type might be very difficult.

\paragraph{Using the unrepairability framework in other contexts}
In this work we show that the concept of unrepairability allows to unify and reprove many property testing results on one-dimensional arrays. What about multi-dimensional arrays? for example, can one generalize the currently known proofs for ``bounded derivative'' properties (including monotonicity and Lipschitz continuity) in $d$ dimensions to a larger class of local properties? 

\paragraph{Inference}
As mentioned in Subsection \ref{subsec:upper_bound_1D}, our test queries boundaries of block-like structures, and later infers whether each block is  \emph{unrepairable} (recall the definition from Subsection \ref{subsec:upper_bound_1D}). The inference takes place without making any additional queries, and is based only on the property $\mathcal{P}$, the alphabet $\Sigma$, and the values of $A$ in the boundary of the block.  

The running time of the inference step is very large in general (although, as we have seen, in the 1D case it can be significantly improved using dynamic programming). The naive way to run the inference is by enumerating over all possible ways to fill the interior of the block, and checking whether each such possibility is indeed $\mathcal{F}$-free. The running time of this method is of order $|\Sigma|^{O(n^d)}$ in general for $d > 1$, and is exponential in $n$ even if $|\Sigma| = 2$.  

However, for many natural properties, inference can be done much more efficiently. For example, in monotonicity testing, the inference amounts to checking that no pair of boundary entries violates the monotonicity. The lower bound constructions from Sections \ref{sec:lower_bound} and \ref{section:lower_shrinking} depict other properties where inference is efficient: it is not hard to show that the running time of inference in both cases is $O(k \epsilon^{-1/d} n^{d-1})$, which is sublinear in $n^d$ for a wide range of parameters. 

Thus, we believe that understanding inference better -- including tasks  such as characterizing properties in which inference can be done efficiently, and understanding the inference time of specific properties of interest -- would be an interesting direction for future research.

\subsection{Property testing notation}
\label{subsec:notation}
The property testing notation we use along the paper is standard. Given a property $\mathcal{P}$ of $[n]^d$-arrays over $\Sigma$, a proximity parameter $\epsilon > 0$, and query access to an unknown $[n]^d$-array $A$, a \emph{two-sided error $\epsilon$-test} must accept $A$ with probability at least $2/3$ if $A$ satisfies $\mathcal{P}$, and reject with probability $2/3$ if $A$ is \emph{$\epsilon$-far} from $\mathcal{P}$ (meaning that the relative \emph{Hamming distance} of $A$ from $\mathcal{P}$ is at least $\epsilon$, that is, we need to modify at least $\epsilon n^d$ values in $A$ to make it satisfy $\mathcal{P}$). 
A \emph{one-sided error test} is defined similarly, but it must accept if $A$ satisfies $\mathcal{P}$. A test is \emph{non-adaptive} if it makes all of its queries in advance (prior to receiving any of the queried values), and \emph{adaptive} otherwise. 

\paragraph{Organization}
In Sections \ref{sec:grid_structure}, \ref{sec:weak_test} and \ref{sec:strong_test} we prove the upper bounds: Section \ref{sec:grid_structure} is devoted to the infrastructure needed for the proof, Section \ref{sec:weak_test} presents a simple but non-optimal test, and finally, Section \ref{sec:strong_test} presents the optimal test and proves Theorems \ref{thm:strong_upper_bound} and \ref{thm:proximity_oblivious}.

In Sections \ref{sec:lower_bound} and \ref{section:lower_shrinking} we prove the lower bounds: the former is devoted to proving the bounds with an alphabet size that of exponential size, while the latter shows to shrink the alphabet in the proof of Theorem \ref{thm:main_lower_bound}.
\section{The grid structure}
\label{sec:grid_structure}

In this section we present the grid-like structure in $[n]^d$ that we utilize for our tests. 

\begin{definition}[Interval partition]
A subset $I \subseteq [n]$ is an \emph{interval} if its elements are consecutive, that is, if $I = \{x, x+1, \ldots, x+y\}$ for some $x \in [n]$ and $y \geq 0$. For any $\ell \geq 0$, we denote the set of the smallest $\ell$ elements of $I$ by $I[\colon\ell]$ and also define $I[\ell+1\colon] = I \setminus I[\colon\ell]$. In the degenerate case that $|I| < \ell$, we define $I[\colon\ell]$ to be equal to $I$.

For $1 \leq w \leq n$, an \emph{$(n, w)$-interval partition} is a partition of $[n]$ into a collection of disjoint intervals $\mathcal{I} = (I_1, \ldots, I_t)$ where the number of elements in each interval $I_i$ is either $w$ or $w+1$, and for any $ i < j$, all elements of $I_i$ are smaller than those in $I_j$. 
\end{definition}

\begin{lem}
	\label{lem:interval_partition}
For any positive integer $n$ and $0 \leq i \leq \log n$, there exists an $(n, \lfloor n/2^i \rfloor)$-interval partition $\mathcal{I}_i$ containing exactly $2^i$ intervals, so that the family $\{\mathcal{I}\}_{i=0}^{\lfloor \log n \rfloor}$ satisfies the following. For any $i > j$ and interval $I \in \mathcal{I}_i$, there exists an interval $I' \in \mathcal{I}_j$ satisfying $I \subseteq I'$.  
\end{lem}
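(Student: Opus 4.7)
The plan is to construct the family $\{\mathcal{I}_i\}$ by induction on $i$, starting from the trivial partition $\mathcal{I}_0 = \{[n]\}$ (which has one interval of size $n = \lfloor n/2^0 \rfloor$), and obtaining $\mathcal{I}_{i+1}$ from $\mathcal{I}_i$ by splitting every interval in $\mathcal{I}_i$ into two consecutive sub-intervals of appropriate sizes. The refinement property $I \subseteq I'$ for $I \in \mathcal{I}_i$, $I' \in \mathcal{I}_j$ with $i > j$ then follows automatically by transitivity, so the only content is checking that the sizes work out.

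Write $w_i = \lfloor n/2^i \rfloor$, and note the key arithmetic identity: $w_i \in \{2 w_{i+1},\, 2 w_{i+1} + 1\}$, which is immediate from $w_{i+1} = \lfloor w_i / 2\rfloor$ (itself a consequence of $\lfloor \lfloor x \rfloor / 2 \rfloor = \lfloor x/2 \rfloor$ applied to $x = n/2^i$). The inductive hypothesis is that $\mathcal{I}_i$ is an $(n, w_i)$-interval partition with exactly $2^i$ intervals, each of size $w_i$ or $w_i + 1$. I then split an interval $I \in \mathcal{I}_i$ of length $\ell \in \{w_i, w_i+1\}$ by taking $I[:\lfloor \ell/2\rfloor]$ and $I[\lfloor \ell/2 \rfloor + 1 :]$ (or some consistent choice of left/right halves), and verify case-by-case that the two pieces have size in $\{w_{i+1}, w_{i+1}+1\}$:
\begin{itemize}
    \item If $w_i = 2 w_{i+1}$, an interval of size $w_i$ splits into two of size $w_{i+1}$, while an interval of size $w_i+1$ splits into one of size $w_{i+1}$ and one of size $w_{i+1}+1$.
    \item If $w_i = 2 w_{i+1}+1$, an interval of size $w_i$ splits into one of size $w_{i+1}$ and one of size $w_{i+1}+1$, while an interval of size $w_i+1$ splits into two of size $w_{i+1}+1$.
\end{itemize}
In all four cases the resulting pieces have length in $\{w_{i+1}, w_{i+1}+1\}$, and by construction they are consecutive sub-intervals of $I$, so their union is $I$ and the left piece consists of elements smaller than those in the right piece. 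Since $\mathcal{I}_i$ has $2^i$ intervals and each is split in two, $\mathcal{I}_{i+1}$ has $2^{i+1}$ intervals; and since the total length $n$ is preserved, we have produced a valid $(n, w_{i+1})$-interval partition. The containment property $I \subseteq I'$ for $I \in \mathcal{I}_{i+1}$ and the corresponding $I' \in \mathcal{I}_i$ is immediate because $I$ is one of the two halves of $I'$.

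I do not anticipate a real obstacle: the construction is forced, and the arithmetic is a short case analysis. The only mild subtlety worth stating cleanly is the identity $w_{i+1} = \lfloor w_i/2 \rfloor$, which is what guarantees the halving of an interval of size $w_i$ or $w_i+1$ always yields pieces in $\{w_{i+1}, w_{i+1}+1\}$; without this identity one might worry about accumulating rounding errors across the $\log n$ levels, but the identity rules this out.
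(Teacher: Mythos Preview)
Your proof is correct and follows essentially the same approach as the paper: both construct the partitions by induction, starting from $\mathcal{I}_0 = ([n])$, splitting each interval of $\mathcal{I}_i$ into two sub-intervals to form $\mathcal{I}_{i+1}$, and using the identity $\lfloor n/2^{i+1}\rfloor = \lfloor \lfloor n/2^i\rfloor / 2\rfloor$ to control the sizes. Your version is in fact more explicit than the paper's, which simply asserts that the required decomposition into two pieces of size in $\{w_{i+1}, w_{i+1}+1\}$ ``is indeed always possible'' without spelling out the four cases.
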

\begin{proof}
For any $i$ define $n_i = \lfloor n / 2^i \rfloor$; observe that $n_0 = n$ and $n_{i+1} = \lfloor n_i / 2 \rfloor$ for any $i$.
We prove the lemma by induction on $i$, starting by defining $\mathcal{I}_0 = ([n])$.
Given $\mathcal{I}_i = (I_1^i, \ldots, I_{2^i}^i)$ in which all intervals are of length $n_i$ or $n_i+1$, we define $\mathcal{I}_{i+1}$ as follows. Each $I_j^i \in \mathcal{I}_i$, is decomposed into two intervals $I_{2j-1}^{i+1}, I_{2j}^{i+1}$ where $|I_{2j-1}^{i+1}|, |I_{2j}^{i+1}| \in \{n_{i+1}, n_{i+1}+1\}$, and all elements of $I_{2j-1}^{i+1}$ are smaller than all elements of $I_{2j}^{i+1}$; observe that such a decomposition is indeed always possible. Now define $\mathcal{I}_{i+1} = (I_1^{i+1}, \ldots, I_{2^{i+1}}^{i+1})$. Clearly, the intervals of $\mathcal{I}_{i+1}$ satisfy the last condition of the lemma.
\end{proof}
In particular, we conclude that for any positive integer $w$ and any $n \geq w$ there exists an integer $w/2 \leq w' \leq w$ for which an $(n,w')$-interval partition exists. 

\begin{definition}[$(n, d, k, w)$-grid]
Let $2 \leq w \leq n$ be integers for which an $(n, w)$-interval partition $\mathcal{I} = (I_1, \ldots, I_t)$ exists. 
For integers $2 \leq k \leq w$ and $d \geq 1$, the ($d$-dimensional) \emph{$(n, d, k, w)$-grid} induced by $\mathcal{I}$ is the set
$$
G = \left\{ (x_1, \ldots, x_d) \in [n]^d \ \bigg| \  \exists i \in [d]\ \text{ such that } x_i \in \bigcup_{j=1}^{t} I_j[\colon k-1] \right\}.
$$
We denote the family of all $(n, d, k, w)$-grids by $\mathcal{G}(n, d, k, w)$. As we have seen in Lemma \ref{lem:interval_partition}, the family $\mathcal{G}(n,d,k,w)$ is non-empty for any $w = \lfloor n / 2^i \rfloor$ satisfying $w \geq k$.  
\end{definition}

\begin{definition}[$G$-block, Boundary, Closure]
\label{def:block_boundary_closure}
Two tuples $x = (x_1, \ldots, x_d)$ and $y = (y_1, \ldots, y_d)$ in $[n]^d$ are considered \emph{neighbors} if $\sum_{i=1}^{d} |x_i - y_i| = 1$. Given a grid $G \in \mathcal{G}(n, d, k, w)$, consider the neighborhood graph of non-grid entries, i.e., the graph whose set of vertices is $V = [n]^d \setminus G$ and two entries are connected if they are neighbors. A $G$-\emph{block} $B$ is a connected component of this graph, and the \emph{closure} of $B$ is
$$
\overline{B} = \left\{(x_1, \ldots, x_d) \in [n]^d \ \bigg|\  \exists (y_1, \ldots, y_d) \in B \text{ such that }  \forall i \in [d] \  |x_i - y_i| < k \right\}.
$$ 
Note that $B \subseteq \overline{B}$. Define the \emph{boundary} of the block $B$ as $\partial B = \overline{B} \setminus B$.
\end{definition}

The above notions can naturally be defined with Cartesian products. 
Recall that the Cartesian product of sets $X_1, \ldots, X_d$, denoted $\prod_{j=1}^{d} X_j$ or $X_1 \times \ldots \times X_j$, is the set of all tuples $(x_1, \ldots, x_d)$ with $x_j \in X_j$ for any $j \in [d]$. 
Let $G \in \mathcal{G}(n, d, k, w)$ be the grid induced by the interval partition $\mathcal{I} = (I_1, \ldots, I_t)$.
It is not difficult to verify that any $G$-block $B$ can be defined as a Cartesian product $B = \prod_{j=1}^{d} I_{i_j}[k\colon ]$ for some intervals $I_{i_1}, \ldots, I_{i_d} \in \mathcal{I}$ (not necessarily different). 

$\overline{B}$ and $\partial B$ can also be defined accordingly, as we detail next. For $k$ as above, define $\overline{I_i} = I_{i} \cup I_{i+1}[\colon k-1]$ for any $1 \leq i \leq t$, where we take $I_{t+1} = \emptyset$ for consistency. Also define $\partial I_i = \overline{I_i} \setminus I_i[k\colon ] = I_i[\colon k-1] \cup I_{i+1}[\colon k-1]$. With these in hand, we have
\begin{align}
\label{eqn:explicit_representation}
B = \prod_{j=1}^d I_{i_j}[k\colon]\ ; \hspace{0.6cm}
\overline{B} = \prod_{j=1}^d \overline{I_{i_j}} \ ; \hspace{0.6cm}
\partial B = \bigcup_{j=1}^{d} \overline{I_{i_1}} \times \ldots \times \overline{I_{i_{j-1}}} \times \partial I_{i_j} \times \overline{I_{i_{j+1}}} \times \ldots \times \overline{I_{i_d}}
\end{align} 
Recall that $|I_{i_j}| \in \{w, w+1\}$ for any $j$, implying that $\big|I_{i_j}[k:]\big| \leq w+2-k$ and $\big|\overline{I_{i_j}}\big| \leq w+k$. Also note that $\big|\partial I_{i_j}\big| \leq 2(k-1)$. Thus,
\begin{align}
\label{eqn:block_size}
|B| \leq (w+2-k)^d \ ; \hspace{1cm}
|\overline{B}| \leq (w+k)^d \ ; \hspace{1cm} 
|\partial B| \leq 2d (k-1) \cdot (w+k)^{d-1}\ ,
\end{align} 
where the inequality on $|\partial B|$ holds since each set in the union expression in \eqref{eqn:explicit_representation} is of size at most $(2k-2) (w+k)^{d-1}$.

The following observation is a direct consequence of \eqref{eqn:explicit_representation}.
\begin{obs}
\label{obs:boundary_contained}
Let $G \in \mathcal{G}(n, d, k, w)$. The boundary of any $G$-block is contained in $G$.
\end{obs}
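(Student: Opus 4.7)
The plan is to derive the observation directly from the explicit Cartesian product representation of $B$, $\overline{B}$, and $\partial B$ given in equation~\eqref{eqn:explicit_representation}. Since $B$ is a connected component of $[n]^d \setminus G$, it can be written as $B = \prod_{j=1}^d I_{i_j}[k\colon]$ for some intervals $I_{i_1}, \ldots, I_{i_d} \in \mathcal{I}$, and correspondingly $\partial B = \bigcup_{j=1}^{d} \overline{I_{i_1}} \times \cdots \times \partial I_{i_j} \times \cdots \times \overline{I_{i_d}}$.

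First, I would fix an arbitrary point $x = (x_1, \ldots, x_d) \in \partial B$. By the union expression for $\partial B$, there exists some index $j \in [d]$ so that $x_j \in \partial I_{i_j}$. Recall that $\partial I_{i_j} = I_{i_j}[\colon k-1] \cup I_{i_j + 1}[\colon k-1]$; hence $x_j$ lies in the first $k-1$ elements of some interval of the partition $\mathcal{I}$, that is, $x_j \in \bigcup_{m=1}^{t} I_m[\colon k-1]$.

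Finally, I would invoke the definition of the $(n,d,k,w)$-grid: $G$ consists of precisely those tuples having at least one coordinate in $\bigcup_{m} I_m[\colon k-1]$. Since $x_j$ satisfies this condition, we conclude $x \in G$, which gives $\partial B \subseteq G$ as desired.

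There is essentially no obstacle here — the entire content of the observation is a careful unpacking of the definitions of $\partial B$, $\partial I_i$, and $G$. The only thing that requires a moment of care is the correctness of the Cartesian product formula for $\partial B$ itself, but this is already recorded (and implicitly verified) in equation~\eqref{eqn:explicit_representation} of the preceding discussion, so it may be used as given.
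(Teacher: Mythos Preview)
Your proposal is correct and follows exactly the approach intended in the paper: the observation is stated there as ``a direct consequence of \eqref{eqn:explicit_representation}'', and you have simply spelled out that consequence by noting that any point of $\partial B$ has some coordinate lying in a set of the form $\partial I_{i_j} \subseteq \bigcup_m I_m[\colon k-1]$, which is precisely the defining condition for membership in $G$.
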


\begin{lem}
\label{lem:any_copy_contained}
For any $G \in \mathcal{G}(n, d, k, w)$, any width-$k$ subarray of an $[n]^d$-array intersects exactly one $G$-block $B$. Moreover, the subarray is contained in $\overline{B}$.
\end{lem}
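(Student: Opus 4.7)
The plan is to reduce to the one-dimensional case and then reassemble via the Cartesian product description of blocks in \eqref{eqn:explicit_representation}. A width-$k$ subarray has the form $S=\prod_{j=1}^{d}[a_j,a_j+k-1]$ with $a_j+k-1\le n$, and since both $B$ and $\overline{B}$ are Cartesian products over coordinates, it suffices to find, for each coordinate $j$ separately, a unique interval $I_{i_j}\in\mathcal{I}$ such that $[a_j,a_j+k-1]$ meets $I_{i_j}[k\colon]$ and is contained in $\overline{I_{i_j}}$. Then $B=\prod_j I_{i_j}[k\colon]$ is the only candidate block intersected by $S$, and $\overline{B}=\prod_j \overline{I_{i_j}}$ will contain $S$.

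For the one-dimensional step, write $I_m=[L_m,R_m]$, so that $I_m[k\colon]=[L_m+k-1,R_m]$ (non-empty because $|I_m|\ge w\ge k$) and $\overline{I_m}=[L_m, R_m+k-1]$ (valid since $|I_{m+1}|\ge k-1$; the boundary case $m=t$ degenerates to $\overline{I_t}=I_t=[L_t,n]$, which is unproblematic because the validity of $S$ gives $a_j+k-1\le n$). A direct interval-intersection computation yields that $[a_j,a_j+k-1]\cap [L_m+k-1,R_m]\neq\emptyset$ if and only if $L_m\le a_j\le R_m$, i.e., $a_j\in I_m$. Since $\mathcal{I}$ partitions $[n]$, this uniquely determines $i_j$ as the index of the interval containing $a_j$; the containment $[a_j,a_j+k-1]\subseteq[L_{i_j},R_{i_j}+k-1]$ then follows immediately from $L_{i_j}\le a_j\le R_{i_j}$.

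Taking products, the block $B=\prod_{j=1}^d I_{i_j}[k\colon]$ is indeed a $G$-block by the representation noted just before \eqref{eqn:explicit_representation}, and $S\cap B$ is the product of the non-empty one-dimensional intersections, so $S\cap B\neq\emptyset$. Conversely, if $B'=\prod_j I_{i'_j}[k\colon]$ is any $G$-block meeting $S$, then in each coordinate we must have $[a_j,a_j+k-1]\cap I_{i'_j}[k\colon]\neq\emptyset$, forcing $i'_j=i_j$ by the 1D uniqueness, hence $B'=B$. The inclusion $S\subseteq \overline{B}=\prod_j \overline{I_{i_j}}$ is the coordinate-wise product of the 1D containments. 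The only real obstacle is clerical bookkeeping at the extreme intervals $I_1$ and $I_t$, which is handled by the size bound $w\ge k$ and by the constraint that $S$ be a legitimate subarray of $[n]^d$.
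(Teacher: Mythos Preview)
Your proof is correct and follows essentially the same approach as the paper: both identify the relevant block $B$ by taking, in each coordinate, the unique interval $I_{i_j}\in\mathcal{I}$ containing the starting index $a_j$, then verify $S\subseteq\overline{B}$ and $S\cap B\neq\emptyset$ coordinate-wise. The only minor difference is in the uniqueness step: the paper first shows $S\subseteq\overline{B}$ and then invokes Observation~\ref{obs:boundary_contained} (boundaries lie in the grid) to rule out any other block, whereas you establish uniqueness directly via the one-dimensional intersection computation $[a_j,a_j+k-1]\cap I_m[k\colon]\neq\emptyset \iff a_j\in I_m$.
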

\begin{proof}
Let $\mathcal{I} = (I_1, \ldots, I_t)$ be the interval partition inducing $G$.
Suppose that the subarray $S$ is in location $(a_1 ,\ldots, a_d)$ where $a_j \in I_{i_j}$ for some $i_1, \ldots, i_d$ not necessarily distinct. In other words, the set of entries in $S$ is $\prod_{j=1}^{d} S_j$ where $S_j = \{a_j, a_j+1, \ldots, a_j + k-1\}$ for any $j \in [d]$.
We argue that $S$ is contained in $\overline{B}$, where $B = I_{i_1}[k\colon ] \times \ldots \times I_{i_d}[k\colon ]$: The fact that $a_j \in I_{i_j}$ implies that $a_j+1, \ldots, a_{j}+k-1 \in I_{i_j} \cup I_{i_j+1}[\colon k-1]$. It follows from \eqref{eqn:explicit_representation} that $S \subseteq \overline{B}$. 
From Observation \ref{obs:boundary_contained} we conclude that $S$ does not intersect any block other than $B$, and it remains to show that $S$ intersects $B$. Indeed, for any $1 \leq j \leq d$, the fact that $a_j \in I_{i_j}$ implies that one of the elements $a_j, \ldots, a_j + k-1$ must be contained in $I_{i_j}[k\colon ]$. Denoting this element by $b_j$, we conclude that $(b_1, \ldots, b_j) \in S \cap B$. 
\end{proof} 

\section{Testing with grid queries}
\label{sec:weak_test}
In this section we prove the following upper bound for all $k$-local properties; its proof serves as a warm-up towards proving the main upper bound of Theorem \ref{thm:strong_upper_bound}.

\begin{thm}
	\label{thm:weak_upper_bound}
	Any $k$-local property of $[n]^d$-arrays over any alphabet is $\epsilon$-testable with one-sided error using no more than
	$2(d+1) n^{d - \frac{d}{d+1}} k^{\frac{d}{d+1}} \epsilon^{-\frac{1}{d+1}}$ non-adaptive queries.
\end{thm}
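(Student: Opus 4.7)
My plan is to realize the sublinear one-dimensional test sketched in Subsection 1.4.1 (``a simple sublinear test'') in arbitrary dimension $d$, using a single $(n,d,k,w)$-grid for a parameter $w$ to be optimized. Concretely, the test will: (i) pick, via Lemma 2.2, an admissible width $w$ that is within a factor of $2$ of the optimum to be determined; (ii) fix any grid $G \in \mathcal{G}(n,d,k,w)$ and query every entry of $G$; and (iii) pick $M = \lceil c/\epsilon \rceil$ many $G$-blocks uniformly at random (for a small absolute constant $c$) and query every entry in each such block. After making these queries the test rejects iff some width-$k$ subarray of $A$ whose values are fully determined by the queries equals a forbidden pattern in $\mathcal{F}$.

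For completeness, if $A \in \mathcal{P}$ there is no forbidden subarray anywhere and the test never rejects. For soundness, suppose $A$ is $\epsilon$-far from $\mathcal{P}$. Say a block $B$ is \emph{bad} if $\overline{B}$ contains a copy of a forbidden pattern. If some bad block $B$ is also unrepairable, then already its boundary $\partial B \subseteq G$ (by Observation 2.5) forces a forbidden pattern to appear, and the test rejects with probability $1$ by virtue of the grid queries. Otherwise, every bad block is repairable, so one can replace the interior entries of each bad block independently without altering any boundary; by Lemma 2.4 every width-$k$ subarray of $A$ is contained in exactly one $\overline{B}$, so these local replacements eliminate all forbidden patterns from $A$ without creating new ones. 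Hence the number of modifications needed to repair $A$ is at most $\sum_{B \text{ bad}} |B|$, which by the $\epsilon$-farness assumption must exceed $\epsilon n^d$. Since $|B| \le w^d$ by \eqref{eqn:block_size}, the number of bad blocks is at least $\epsilon n^d / w^d = \epsilon \cdot T$, where $T$ is the total number of $G$-blocks. A uniformly random block is bad with probability at least $\epsilon$, and sampling $M = O(1/\epsilon)$ independent blocks catches a bad one (whose full contents are then queried, producing a witness of rejection) with probability at least $2/3$.

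Counting queries, $|G| \le d(k-1)n^{d}/w$ (using the definition of the grid, since the ``thick'' layer $\bigcup_j I_j[:k-1]$ has size at most $(k-1)\lceil n/w\rceil$ and we union over $d$ coordinate directions), and the block queries contribute at most $M w^d = O(w^d/\epsilon)$. Balancing these two terms by solving $d(k-1)n^d/w \asymp w^d/\epsilon$ yields $w \asymp (kn^d\epsilon)^{1/(d+1)}$, which is achievable up to a factor of $2$ by Lemma 2.2 (and lies in the valid range $k \le w \le n$ in the non-trivial regime; outside this regime the bound of the theorem exceeds $n^d$ and so holds vacuously by querying the entire input). Plugging this choice of $w$ into both terms gives a total of $O\bigl((d+1)\cdot k^{d/(d+1)} n^{d - d/(d+1)} \epsilon^{-1/(d+1)}\bigr)$, matching the claimed bound once the absolute constants are tracked carefully. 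The one genuine subtlety — and what I would expect to be the main place to be careful — is the independence of local repairs: this is precisely what Lemma 2.4 buys us, because it ensures that modifying the interior of a block can affect only those forbidden patterns that lie inside the same closure $\overline{B}$, so repair decisions in distinct bad blocks do not interfere with one another.
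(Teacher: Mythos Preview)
Your approach is essentially identical to the paper's (which formalizes the same grid-plus-random-blocks strategy via Lemma~\ref{lem:unrepairability1}), and the query accounting and choice of $w$ are correct. There is, however, one genuine gap: your stated rejection criterion --- ``reject iff some width-$k$ subarray whose values are fully determined by the queries equals a forbidden pattern'' --- is too weak to handle the unrepairable case as you claim. If a block $B$ is unrepairable but not among the $M$ sampled blocks, the grid queries reveal only $\partial B$; the actual $\mathcal{F}$-copy in $\overline{B}$ may lie partly or entirely in the unqueried interior of $B$, so the test as you described it need not see any forbidden pattern among the queried entries. What the boundary values guarantee is that \emph{every} completion of the interior contains an $\mathcal{F}$-copy --- this is exactly Definition~\ref{def:unrepairable} --- but detecting that requires an \emph{inference} step over possible completions, not pattern-matching on the observed entries. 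Your phrase ``forces a forbidden pattern to appear'' gestures at the right idea, but the formal rejection rule must be amended: the test should also reject whenever it can determine, from the queried boundary of some block, that the block is unrepairable (this is what the paper does). Once you make this change, the rest of your argument goes through and matches the paper's proof.
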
 
The upper bound of Theorem \ref{thm:weak_upper_bound} is sublinear in the size of the array as long as $k / \epsilon^{1/d} = o(n)$. 
The rest of the section is dedicated to the proof of the theorem. We may assume that $k \leq \epsilon^{1/d} n / 4$, as otherwise the expression in the statement of the theorem is larger than $n^d$ and the proof follows trivially by querying all $[n]^d$ entries of the given input array. Under this assumption, it holds that $2k \leq n^{d/(d+1)} k^{1/(d+1)}\epsilon^{1/(d+1)}$.

\begin{definition}[Unrepairable block]
\label{def:unrepairable}
Let $A$ be an $[n]^d$-array over $\Sigma$, and let $G \in \mathcal{G}(n, d, k, w)$. A $G$-block $B$ is \emph{$(\mathcal{P}, A)$-unrepairable} (or simply \emph{unrepairable}, if $\mathcal{P}$ and $A$ are clear from context) if any $[n]^d$-array $A'$ over $\Sigma$ that satisfies $A'(x) = A(x)$ for any $x \in \partial B$, including the case $A' = A$, contains an $\mathcal{F}$-copy in $\overline{B}$. Otherwise, the block $B$ is said to be $(\mathcal{P}, A)$-\emph{repairable}. 
\end{definition}
Note that the (un)repairability of a block $B$ is determined solely by the values of $A$ on $\partial B$, and that an unrepairable block always contains an $\mathcal{F}$-copy. These two facts inspire the following lemma, which serves as the conceptual core behind the test of Theorem \ref{thm:weak_upper_bound}.

\begin{lem}
\label{lem:unrepairability1}
Suppose that $A$ is an $[n]^d$-array that is $\epsilon$-far from satisfying a $k$-local property $\mathcal{P}(\mathcal{F})$, and let $G \in \mathcal{G}(n, d, k, w)$ where $w \geq k$. Then at least one of the following holds.
\begin{itemize}
	\item There exists a $(\mathcal{P}, A)$-unrepairable $G$-block.
	\item For at least an $\epsilon$-fraction of the $G$-blocks $B$, there is an $\mathcal{F}$-copy in $\overline{B}$.
\end{itemize}
\end{lem}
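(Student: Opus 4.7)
The plan is to prove the contrapositive: assuming every $G$-block is $(\mathcal{P},A)$-repairable and that strictly fewer than an $\epsilon$-fraction of $G$-blocks $B$ have an $\mathcal{F}$-copy in $\overline{B}$, I will construct an array $A' \in \mathcal{P}$ at Hamming distance less than $\epsilon n^d$ from $A$, contradicting that $A$ is $\epsilon$-far from $\mathcal{P}$. Call a block $B$ \emph{bad} if $\overline{B}$ contains an $\mathcal{F}$-copy in $A$, and \emph{good} otherwise. Repairability supplies, for every bad $B$, a witness array $A_B$ that agrees with $A$ on $\partial B$ and contains no $\mathcal{F}$-copy in $\overline{B}$.

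Define $A'$ by $A'(x) = A_B(x)$ if $x$ lies in the interior of a bad block $B$, and $A'(x) = A(x)$ otherwise. The key structural point, which I would flag as the one place the proof could go wrong, is that repairs of distinct bad blocks must not create new $\mathcal{F}$-copies in each other's closures. This is handled by Observation \ref{obs:boundary_contained}: each boundary $\partial B$ lies inside $G$, so all modifications occur in pairwise disjoint block interiors and never touch the grid. Consequently $A'$ coincides with $A_B$ on the whole of $\overline{B}$ for every bad block $B$ (since the two already agree on $\partial B$) and with $A$ on $\overline{B}$ for every good block.

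To show $A' \in \mathcal{P}$ I would then invoke Lemma \ref{lem:any_copy_contained}: every width-$k$ subarray of $A'$ is contained in $\overline{B}$ for a unique $G$-block $B$, so any would-be $\mathcal{F}$-copy in $A'$ must sit inside some $\overline{B}$. By the previous paragraph no such copy exists, whether $B$ is good (since $A$ itself was clean on $\overline{B}$) or bad (by the defining property of $A_B$). Finally, $A$ and $A'$ differ only on the union of interiors of bad blocks. Using \eqref{eqn:block_size}, each block has size at most $(w+2-k)^d$, all blocks together fit inside $[n]^d$, and there are fewer than an $\epsilon$-fraction of bad blocks; a short counting argument, using the fact that block sizes agree up to a $(1+O(1/w))^d$ factor in the regime $k\le w$ assumed here, converts this into a bound of at most $\epsilon n^d$ modified entries, completing the contradiction.
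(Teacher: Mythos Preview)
Your argument is correct and follows essentially the same route as the paper's proof: assume all blocks are repairable, repair only the bad blocks (those whose closure contains an $\mathcal{F}$-copy), use Lemma~\ref{lem:any_copy_contained} to see that no $\mathcal{F}$-copy survives or is created, and then count. The only difference is cosmetic---you phrase it as a contrapositive and are a bit more explicit about invoking Observation~\ref{obs:boundary_contained}. One small remark: your final counting step is simpler than you suggest; the $(1+O(1/w))^d$ uniformity of block sizes is unnecessary, since the number of bad blocks is strictly less than $\epsilon t^d$, each block has at most $w^d$ entries, and $tw \le n$, giving directly fewer than $\epsilon(tw)^d \le \epsilon n^d$ modified entries.
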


\begin{proof}
Suppose that the first condition does not hold, that is, all ${G}$-blocks are $(\mathcal{P}, A)$-repairable. By Lemma \ref{lem:any_copy_contained}, every $\mathcal{F}$-copy is contained in the closure of some $G$-block.

Let $\mathcal{C}$ denote the collection of all $G$-blocks $B$ such that $A$ contains an $\mathcal{F}$-copy in $\overline{B}$. By the repairability, the values of $A$ in each block $B \in \mathcal{C}$ can be modified so that after the modification, $A$ will not contain an $\mathcal{F}$-copy in $\overline{B}$. We stress that the modifications for each block $B$ appear only in $B$ itself and do not modify entries on the grid, so by Lemma \ref{lem:any_copy_contained}, they cannot create new $\mathcal{F}$-copies in the closure of other blocks. 

After applying all of the above modifications to $A$, we get an $\mathcal{F}$-free array, i.e., an array that satisfies $\mathcal{P}$. $A$ was initially $\epsilon$-far from $\mathcal{P}$, and the number of entries in each block is bounded by $(w+2-k)^d \leq w^d$, implying that at least an $\epsilon$-fraction of the blocks belong to $\mathcal{C}$.
\end{proof}

\begin{proof}[Proof of Theorem \ref{thm:weak_upper_bound}]
We may assume that $k^d / \epsilon \leq n^d/2$, otherwise our test may trivially query all $n^d$ entries of $A$.
Our (non-adaptive) test $T$ picks $W = \lfloor n^{d / (d+1)} k^{1/(d+1)} \epsilon^{1/(d+1)} \rfloor \geq 2k$, and an integer $w$ satisfying $k \leq W/2 \leq w \leq W$, for which an $(n,w)$-interval partition exists. $T$ now makes the following queries.
\begin{enumerate}
	\item $T$ queries all entries of an arbitrarily chosen grid $G \in \mathcal{G}(n, d, k, w)$. The number of entries in any grid is at most $d n^d (k-1) / w \leq 2 d n^{d - \frac{d}{d+1}} k^{\frac{d}{d+1}} \epsilon^{-\frac{1}{d+1}}$.
	\item $T$ chooses a collection $\mathcal{B}$ of $2/\epsilon$ $G$-blocks uniformly at random and queries all entries in these blocks. Since each block contains at most $(w+2-k)^d \leq W^d$ entries, the total number of queries is bounded by $2 W^d / \epsilon \leq 2 n^{d - \frac{d}{d+1}} k^{\frac{d}{d+1}} \epsilon^{-\frac{1}{d+1}}$. Note that the boundaries of all blocks are queried in the first step (since they are contained in the grid). Thus, for any block $B \in \mathcal{B}$, the test queries all entries of $\overline{B}$.
\end{enumerate}
The total number of queries in the above two steps is $2(d+1) n^{d - \frac{d}{d+1}} k^{\frac{d}{d+1}} \epsilon^{-\frac{1}{d+1}}$.

After querying all entries of the grid (and in particular, the whole boundaries of all of the blocks), $T$ can determine for every $G$-block $B$ whether it is $(\mathcal{P}, A)$-unrepairable or not. $T$ rejects if at least one of the blocks is unrepairable or if it found an $\mathcal{F}$-copy in $\overline{B}$ for some $B \in \mathcal{B}$, and accepts otherwise. The test has one-sided error, since an unrepairable block must contain an $\mathcal{F}$-copy. In view of Lemma \ref{lem:unrepairability1}, $T$ rejects arrays $A$ that are $\epsilon$-far from $\mathcal{P}$ with probability at least $2/3$: If $A$ satisfies the first condition of Lemma \ref{lem:unrepairability1}, then $T$ always rejects. If the second condition holds, the probability that none of the $2 / \epsilon$ closures $\overline{B}$ for $B \in \mathcal{B}$ contains an $\mathcal{F}$-copy is bounded by $(1 - \epsilon)^{2 / \epsilon} < e^{-2}$, so $T$ rejects with probability at least $1 - e^{-2} > 2/3$.
\end{proof}

\section{Systems of grids and testing with spherical queries}
\label{sec:strong_test}
In this section we prove Theorems \ref{thm:strong_upper_bound} and \ref{thm:proximity_oblivious}. We do so by considering a system of grids with varying block sizes, defined as follows.
\begin{definition}
Let $d > 0$ and $2 \leq k \leq w \leq n$ be integers. 
An \emph{$(n, d, k, w)$-system of grids} is an $(r+1)$-tuple $(G_0, G_1, \ldots, G_r)$ of grids, for $r(n, w) = \lfloor \log(n/w) \rfloor$, so that 
\begin{itemize}
	\item $G_{i} \in \mathcal{G}(n, d, k, \lfloor n / 2^{r-i} \rfloor)$ for any $0 \leq i \leq r$.
	\item $G_0 \supseteq G_1 \supseteq \ldots \supseteq G_r$ (as subsets of $[n]^d$). In particular, for any $i < j \leq r$, any $G_i$-block $B$ is contained in a $G_j$-block $B'$, and we say that $B'$ is an \emph{ancestor} of $B$. Specifically, the $G_{i+1}$-block containing $B$ is called the \emph{parent} of $B$ and denoted by $\Par(B)$. For the only $G_r$-block, $B_r$, we define $\Par(B_r)$ as the whole domain $[n]^d$.
\end{itemize}
\end{definition}
$r(n,w)$ was chosen so that $w \leq n / 2^{r} < 2w$, making $G_0$ a $\mathcal{G}(n, d, k, w')$-grid for $w \leq w' < 2w$.
As we shall see, when working with such a system, unrepairability of blocks can be handled in a query-efficient way. The following lemma asserts that such a system of grids exists for any suitable choice of parameters.

\begin{lem}
\label{lem:grid_system_existence}
An $(n, d, k, w)$-system of grids exists for all $d > 0$ and $2 \leq k \leq w \leq n$.
\end{lem}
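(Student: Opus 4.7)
The plan is to obtain the system of grids by directly invoking Lemma~\ref{lem:interval_partition} and reading off its refining family of interval partitions. Concretely, set $r = \lfloor \log(n/w) \rfloor$ and let $\mathcal{I}_0, \mathcal{I}_1, \ldots, \mathcal{I}_r$ be the partitions produced by that lemma, so that $\mathcal{I}_j$ is an $(n, \lfloor n/2^j \rfloor)$-interval partition with $2^j$ intervals, and each interval of $\mathcal{I}_{j+1}$ is contained in a (necessarily unique) interval of $\mathcal{I}_j$. Define $G_i$ to be the grid induced by $\mathcal{I}_{r-i}$ for $0 \leq i \leq r$. The width requirement is satisfied because $\lfloor n/2^{r-i} \rfloor \geq \lfloor n/2^r \rfloor \geq w \geq k$, where the middle inequality uses $2^r \leq n/w$; this also confirms $G_i \in \mathcal{G}(n,d,k,\lfloor n/2^{r-i}\rfloor)$.

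The remaining task is to verify the chain $G_0 \supseteq G_1 \supseteq \cdots \supseteq G_r$. Given a partition $\mathcal{I} = (I_1,\ldots,I_t)$, write $B(\mathcal{I}) = \bigcup_{j=1}^{t} I_j[\colon k-1]$ for the union of the leading $(k-1)$-blocks of its intervals. Unpacking the definition of $(n,d,k,w)$-grid, the grid induced by $\mathcal{I}$ consists of exactly those tuples $(x_1,\ldots,x_d) \in [n]^d$ for which some coordinate $x_\ell$ lies in $B(\mathcal{I})$. Consequently, the desired nesting of grids is equivalent to the inclusions $B(\mathcal{I}_{r-j}) \subseteq B(\mathcal{I}_{r-i})$ for $i < j$, which by transitivity reduce to showing $B(\mathcal{I}_s) \subseteq B(\mathcal{I}_{s+1})$ for every $0 \leq s < r$.

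For the latter, recall that the construction in Lemma~\ref{lem:interval_partition} splits every interval $I \in \mathcal{I}_s$ into two consecutive intervals $I', I'' \in \mathcal{I}_{s+1}$ with $I'$ sharing its left endpoint with $I$. I claim $I[\colon k-1] \subseteq I'[\colon k-1] \cup I''[\colon k-1]$. If $|I'| \geq k-1$ this is immediate because the first $k-1$ elements of $I$ coincide with the first $k-1$ elements of $I'$. Otherwise, $|I'| < k-1$ and the convention $I'[\colon k-1] = I'$ applies; then $I'$ already covers the first $|I'|$ elements of $I[\colon k-1]$, while the remaining $(k-1)-|I'| \leq k-1$ elements lie at the start of $I''$ and are therefore contained in $I''[\colon k-1]$. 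Taking the union over all $I \in \mathcal{I}_s$ yields $B(\mathcal{I}_s) \subseteq B(\mathcal{I}_{s+1})$, completing the proof.

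No serious obstacle is anticipated: this lemma is essentially a grid-level repackaging of the refining interval partitions built in Lemma~\ref{lem:interval_partition}. The only subtlety is the inclusion of leading prefixes under splitting, which must be handled uniformly in the degenerate size regime using the convention $I[\colon \ell] = I$ when $|I| < \ell$; once this is done, everything else is bookkeeping.
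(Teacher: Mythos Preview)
Your proof is correct and follows the same construction as the paper: take the refining family of interval partitions from Lemma~\ref{lem:interval_partition} and let $G_i$ be the grid induced by $\mathcal{I}_{r-i}$. The paper simply asserts that ``it is not hard to verify'' the nesting $G_0 \supseteq \cdots \supseteq G_r$, whereas you spell this out via the prefix-set inclusions $B(\mathcal{I}_s) \subseteq B(\mathcal{I}_{s+1})$; note that in the relevant range $s+1 \leq r$ every interval of $\mathcal{I}_{s+1}$ has length at least $\lfloor n/2^r \rfloor \geq w \geq k$, so your degenerate case $|I'| < k-1$ never actually arises (though handling it does no harm).
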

\begin{proof}
Consider the family of interval partitions $\mathcal{I}_0, \ldots, \mathcal{I}_{\lfloor \log n \rfloor}$ obtained by Lemma \ref{lem:interval_partition}. For each $0 \leq i \leq r(n,w)$ define $G_i$ as the $(n,d,k,\lfloor n / 2^{r-i} \rfloor)$-grid induced by $\mathcal{I}_{r-i}$. It is not hard to verify that $(G_0, \ldots, G_r)$ satisfies all requirements of an $(n,d,k,w)$-system of grids.
%
\end{proof}

For the rest of the section, fix a $k$-local property $\mathcal{P}(\mathcal{F})$ of $[n]^d$-arrays over $\Sigma$, as well as an $[n]^d$-array $A$ over $\Sigma$. Consider an $(n, d, k, w)$-system of grids $(G_0, \ldots, G_r)$ constructed as described in the proof of Lemma \ref{lem:grid_system_existence}, where $w$ will be determined later. (For now it suffices to require, as usual, that $2 \leq k \leq w \leq n$.) 

We say that a $G_i$-block $B$ is a \emph{$(\mathcal{P}, A)$-witness} if one of the following holds.
\begin{itemize}
	\item $i = 0$ and the array $A$ contains an $\mathcal{F}$-copy in the closure $\overline{B}$.
	\item $ i > 0$ and $B$ is $(\mathcal{P}, A)$-unrepairable. 
\end{itemize}
Recall that the closure of unrepairable blocks cannot be $\mathcal{F}$-free, so the closure of any witness block contains an $\mathcal{F}$-copy. We say that a witness block $B$ is \emph{maximal} if all of its ancestors are not witnesses, that is, they are repairable.

\begin{obs}
\label{obs:maximal_witness}
	Any $(\mathcal{P}, A)$-witness is contained in a maximal $(\mathcal{P}, A)$-witness.
\end{obs}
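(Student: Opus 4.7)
The plan is to prove this by the elementary observation that the sequence of ancestors of any block is finite, so that one can simply take the highest witness in this sequence. Concretely, given a $(\mathcal{P},A)$-witness $G_i$-block $B$, form the finite chain $B = B^{(0)} \subset B^{(1)} \subset \cdots \subset B^{(r-i)}$, where $B^{(j+1)} := \mathrm{Par}(B^{(j)})$ is the (unique) $G_{i+j+1}$-block containing $B^{(j)}$. By definition of the grid system, each ancestor of $B$ at level $i+j$ coincides with $B^{(j)}$, so this chain enumerates exactly the ancestors of $B$ (together with $B$ itself).

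Now let $J = \{\, 0 \leq j \leq r-i : B^{(j)} \text{ is a }(\mathcal{P},A)\text{-witness}\,\}$. This set is nonempty since $0 \in J$ (as $B$ itself is a witness by hypothesis), and it is finite since it is a subset of $\{0, 1, \ldots, r-i\}$. Let $j^* = \max J$ and set $B^* = B^{(j^*)}$. Then $B^*$ is a witness by construction, and for every strict ancestor of $B^*$ — which, by the chain identification above, is one of $B^{(j^*+1)}, \ldots, B^{(r-i)}$ — the definition of $j^*$ forces that ancestor to not be a witness, i.e.\ to be repairable (note that all these ancestors sit at levels $> 0$, so ``witness'' there does mean ``unrepairable'', matching the clause ``they are repairable'' in the definition of maximality). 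Hence $B^*$ is a maximal $(\mathcal{P}, A)$-witness, and $B \subseteq B^*$ since the chain is increasing in containment.

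There is essentially no obstacle here: the only mild point to verify is that ``ancestor'' in the definition of maximality is the same notion as ``iterated parent'', which follows immediately from the definition of the grid system (the $G_j$-block containing $B$, for any $j > i$, is uniquely determined, and equals $\mathrm{Par}^{j-i}(B)$ by nested containment of the grids $G_i \supseteq G_{i+1} \supseteq \cdots \supseteq G_r$).
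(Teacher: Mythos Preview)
Your argument is correct and is exactly the natural one. The paper itself states this as an observation without proof; your write-up simply spells out the intended elementary reasoning (finite chain of ancestors, take the highest index at which the block is still a witness).
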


We define the \emph{maximal witness family} $\mathcal{W}$ as the set of all maximal $(\mathcal{P}, A)$-witness blocks. Obviously, the blocks in $\mathcal{W}$ might come from different $G_i$'s 

\begin{obs}
$B_1 \cap B_2 = \emptyset$ for any two blocks $B_1, B_2 \in \mathcal{W}$.
\end{obs}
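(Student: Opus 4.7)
The plan is to argue by contradiction: assume $B_1, B_2 \in \mathcal{W}$ are distinct but share a common entry $x \in B_1 \cap B_2$, and derive a contradiction using the nested structure of the system of grids together with the maximality of $B_1$ and $B_2$ as witnesses. Say $B_1$ is a $G_i$-block and $B_2$ is a $G_j$-block; by symmetry, I may assume $i \leq j$.

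First I would dispense with the case $i = j$. Here both $B_1$ and $B_2$ are $G_i$-blocks, which by Definition~\ref{def:block_boundary_closure} are distinct connected components of the neighborhood graph on $[n]^d \setminus G_i$. Distinct connected components are disjoint, so $B_1 \cap B_2 = \emptyset$, contradicting the existence of the shared entry $x$. Hence $B_1 = B_2$ in this regime.

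The interesting case is $i < j$. Here I would invoke the ancestry property of an $(n,d,k,w)$-system of grids: $B_1$ is contained in a unique $G_j$-block $B'$, namely its ancestor at level $j$. Since $x \in B_1 \subseteq B'$ and also $x \in B_2$, we get $B' \cap B_2 \neq \emptyset$, and by the case $i=j$ applied inside $G_j$ this forces $B' = B_2$. Therefore $B_2$ is a (strict, since $j > i$) ancestor of $B_1$. Now invoke maximality of $B_1$: since $B_1 \in \mathcal{W}$, every strict ancestor of $B_1$ is a non-witness, i.e., is $(\mathcal{P},A)$-repairable (or, if the ancestor lies at level $0$, contains no $\mathcal{F}$-copy in its closure). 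But $B_2 \in \mathcal{W}$ is a witness, a contradiction.

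I do not anticipate any serious obstacle here; the observation is essentially bookkeeping once the two facts \textit{distinct blocks of a single grid are disjoint} and \textit{blocks at lower levels are contained in unique blocks at higher levels} are in hand, both of which are built into the definitions above. The only minor subtlety is to handle the symmetric labeling of $B_1$ and $B_2$ (which grid each comes from) cleanly, which is handled by the WLOG assumption $i \leq j$ at the outset.
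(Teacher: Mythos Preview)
Your argument is correct and is precisely the natural justification; the paper states this as an observation without proof, treating it as immediate from the definitions. One minor remark: your parenthetical about an ancestor lying at level~$0$ is vacuous, since ancestors of a $G_i$-block sit at levels $j>i\geq 0$ and hence always at level $\geq 1$, so the non-witness condition for them is exactly repairability.
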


\begin{lem}
All $\mathcal{F}$-copies in $A$ are fully contained in $\bigcup_{B \in \mathcal{W}} \overline{B}$.
\end{lem}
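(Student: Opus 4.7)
The plan is to combine Lemma \ref{lem:any_copy_contained} with Observation \ref{obs:maximal_witness}, plus a monotonicity-of-closure property along the ancestry chain. Fix an arbitrary $\mathcal{F}$-copy in $A$, i.e., a width-$k$ subarray of $A$ that lies in $\mathcal{F}$. First I would apply Lemma \ref{lem:any_copy_contained} to the finest grid $G_0$: this yields a unique $G_0$-block $B_0$ such that the $\mathcal{F}$-copy is contained in $\overline{B_0}$. By the $i=0$ clause in the definition of a witness, $B_0$ is itself a $(\mathcal{P}, A)$-witness, and by Observation \ref{obs:maximal_witness} it is contained in some maximal witness $B^* \in \mathcal{W}$. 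It thus suffices to establish the single inclusion $\overline{B_0} \subseteq \overline{B^*}$; the $\mathcal{F}$-copy will then automatically lie in $\bigcup_{B \in \mathcal{W}} \overline{B}$.

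Next I would reduce $\overline{B_0} \subseteq \overline{B^*}$ to a one-step claim along the ancestry chain $B_0 \subseteq \Par(B_0) \subseteq \Par(\Par(B_0)) \subseteq \ldots \subseteq B^*$: for any $G_j$-block $B$ with $G_{j+1}$-parent $B'$, it holds that $\overline{B} \subseteq \overline{B'}$. Using the Cartesian representation from \eqref{eqn:explicit_representation}, I would write $B = \prod_l I_{i_l}[k\colon]$ with $I_{i_l} \in \mathcal{I}_{r-j}$ and $B' = \prod_l J_{j_l}[k\colon]$ with $J_{j_l} \in \mathcal{I}_{r-j-1}$, so that $\overline{B} = \prod_l \overline{I_{i_l}}$ and $\overline{B'} = \prod_l \overline{J_{j_l}}$. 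Thus it suffices to verify the coordinate-wise inclusion $\overline{I_{i_l}} \subseteq \overline{J_{j_l}}$ for each $l \in [d]$.

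To verify this, I would appeal to the recursive construction in the proof of Lemma \ref{lem:interval_partition}, whereby each $J_{j_l}$ splits into exactly two consecutive sub-intervals of $\mathcal{I}_{r-j}$, one of which is $I_{i_l}$. If $I_{i_l}$ is the first of these sub-intervals, then $I_{i_l+1}$ is the second, so $I_{i_l+1} \subseteq J_{j_l}$ and $\overline{I_{i_l}} = I_{i_l} \cup I_{i_l+1}[\colon k-1] \subseteq J_{j_l} \subseteq \overline{J_{j_l}}$. The hard part will be the other case, where $I_{i_l}$ is the second sub-interval of $J_{j_l}$: then $I_{i_l+1}$ is the first sub-interval of $J_{j_l+1}$ (or is empty, if $J_{j_l+1}$ itself does not exist, in which case the inclusion is immediate). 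Here I would use the fact that the first sub-interval of $J_{j_l+1}$ starts at exactly the beginning of $J_{j_l+1}$, and that its length is at least $w \geq k$, to conclude $I_{i_l+1}[\colon k-1] = J_{j_l+1}[\colon k-1]$, giving $\overline{I_{i_l}} = I_{i_l} \cup I_{i_l+1}[\colon k-1] \subseteq J_{j_l} \cup J_{j_l+1}[\colon k-1] = \overline{J_{j_l}}$. Iterating this one-step inclusion up the ancestry chain then yields $\overline{B_0} \subseteq \overline{B^*}$ and completes the argument.
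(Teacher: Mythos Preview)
Your proposal is correct and follows exactly the paper's approach: apply Lemma~\ref{lem:any_copy_contained} to $G_0$, note the resulting block is a witness, invoke Observation~\ref{obs:maximal_witness} to get a maximal witness $B^*\in\mathcal{W}$, and conclude via $\overline{B_0}\subseteq\overline{B^*}$. The only difference is that the paper asserts this last inclusion in one line, whereas you work it out coordinate-by-coordinate; in fact it is immediate from Definition~\ref{def:block_boundary_closure}, since $\overline{B}=\{x:\exists\,y\in B,\ \|x-y\|_\infty<k\}$ is visibly monotone in $B$, so your interval-level case analysis, while correct, is not needed.
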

\begin{proof}
Let $F$ be an $\mathcal{F}$-copy in $A$. By Lemma \ref{lem:any_copy_contained}, $F$ is contained in the closure of a unique $G_0$-block $B_F$; hence, $B_F$ is a $(\mathcal{P}, A)$-witness. From Observation \ref{obs:maximal_witness} we have $B_F \subseteq B'$ for some maximal $(\mathcal{P}, A)$-witness $B'$. We conclude that $F \in \overline{B_F} \subseteq \overline{B'}$.
\end{proof}

\begin{lem}
\label{lem:only_parents}
One can make $A$ satisfy $\mathcal{P}$ by only modifying entries of $A$ in $\bigcup_{B \in \mathcal{W}} {\Par(B)}$.
\end{lem}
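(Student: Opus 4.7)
My plan is to repair only the containment-maximal elements of the family $\{\Par(B) : B \in \mathcal{W}\}$. Call this subfamily $\mathcal{W}^*$. First I would verify that distinct blocks $P_1, P_2 \in \mathcal{W}^*$ have disjoint interiors: if they sit at the same grid level they are distinct connected components of the same complement $[n]^d \setminus G_j$ and hence disjoint, while if they sit at distinct levels $j_1 < j_2$, then since $G_{j_2} \subseteq G_{j_1}$ the block $P_1$ must lie inside some $G_{j_2}$-block $P_2'$; containment-maximality of $P_2$ forces $P_2' \neq P_2$, and therefore $P_1 \cap P_2 = \emptyset$.

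Next I would argue that each $P \in \mathcal{W}^*$ is repairable. Write $P = \Par(B)$ for some $B \in \mathcal{W}$ at grid level $i$. If $i < r$ then $P$ is a $G_{i+1}$-block, and the maximality of $B$ as a witness means all ancestors of $B$ (including $P$) are not witnesses, hence $P$ is repairable. The edge case $B = B_r$, where $P = [n]^d$, can be dispensed with by replacing $A$ globally by any fixed element of $\mathcal{P}$ (which is non-empty by the hypothesis that $A$ is $\epsilon$-far from $\mathcal{P}$). Because distinct $P \in \mathcal{W}^*$ have disjoint interiors and repairing a block modifies only entries of its interior (leaving $\partial P$ untouched), the repairs can be carried out simultaneously without conflict; call the resulting array $A'$. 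By construction the modified entries lie in $\bigcup_{P \in \mathcal{W}^*} P \subseteq \bigcup_{B \in \mathcal{W}} \Par(B)$, as required.

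It remains to check that $A'$ is $\mathcal{F}$-free, which I would prove by contradiction. Suppose $A'$ contains an $\mathcal{F}$-copy $F$. If $F$ meets the interior of some $P \in \mathcal{W}^*$, say at grid level $j$, then Lemma \ref{lem:any_copy_contained} applied to $G_j$ forces $F \subseteq \overline{P}$, contradicting that $\overline{P}$ is $\mathcal{F}$-free after repair. Otherwise $F$ avoids every $P \in \mathcal{W}^*$, hence uses no modified entry, so $F$ is already an $\mathcal{F}$-copy in the original $A$. The preceding lemma then places $F \subseteq \overline{B}$ for some $B \in \mathcal{W}$; taking $P \in \mathcal{W}^*$ to be the containment-maximal element with $\Par(B) \subseteq P$, we get $F \subseteq \overline{B} \subseteq \overline{P}$ (closures respect containment), and a second application of Lemma \ref{lem:any_copy_contained} at the grid level of $P$ forces $F \cap P \neq \emptyset$, contradicting that $F$ avoids $P$.

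The main subtlety is the bookkeeping around nested parents that sit at different grid levels; once one restricts attention to the containment-maximal subfamily $\mathcal{W}^*$, the two applications of Lemma \ref{lem:any_copy_contained} above cleanly isolate the repaired closures from the rest. The edge case $\Par(B_r) = [n]^d$ requires only cosmetic treatment and introduces no real difficulty.
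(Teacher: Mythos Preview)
Your proof is correct and follows the same approach as the paper---repair the parent of each maximal witness and argue that no new $\mathcal{F}$-copies arise---but you are actually more careful than the paper on one point: parents of distinct maximal witnesses can nest, and by passing to the containment-maximal subfamily $\mathcal{W}^*$ you make explicit a disjointness that the paper's proof glosses over when it says ``applying these modifications for all $B \in \mathcal{W}$.''

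One small point deserves an extra sentence. To conclude that $\overline{P}$ is $\mathcal{F}$-free in $A'$ you need that no \emph{other} repair touches $\partial P$, i.e., that $P' \cap \overline{P} = \emptyset$ for distinct $P, P' \in \mathcal{W}^*$; disjoint interiors alone do not rule out the repair of $P'$ modifying an entry of $\partial P$. This is easy to add: if $P'$ sits at a level $j' \leq j$ then $P' \subseteq [n]^d \setminus G_{j'} \subseteq [n]^d \setminus G_j$ while $\partial P \subseteq G_j$ by Observation~\ref{obs:boundary_contained}; if $j' > j$ then $P$ lies in some $G_{j'}$-block $Q$, necessarily $Q \neq P'$ (since $P \cap P' = \emptyset$), and then $\overline{P} \subseteq \overline{Q}$ is disjoint from $P'$ because $\partial Q \subseteq G_{j'}$. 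Also, in your disjointness paragraph the phrase ``containment-maximality of $P_2$'' should read $P_1$ (it is $P_1$'s maximality that would be violated if $P_1 \subseteq P_2$).
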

\begin{proof}
Fix $B \in \mathcal{W}$. $B$ is a maximal $(\mathcal{P}, A)$-witness, so $\Par(B)$ is repairable.\footnote{Note that when $B = B_r$ is the maximal witness considered, $\Par(B)$ is $[n]^d$; the latter is repairable for any non-empty property.} Thus, One can make $\overline{\Par(B)}$ $\mathcal{F}$-free by only modifying entries inside $\Par(B)$. By Lemma \ref{lem:any_copy_contained}, width-$k$ subarrays that are not fully contained in $\overline{\Par(B)}$ are left unchanged. Therefore, this modification does not create any new $\mathcal{F}$-copies in $A$. Seeing that all $\mathcal{F}$-copies in $A$ are originally contained in $\bigcup_{B \in \mathcal{W}} \overline{B} \subseteq \bigcup_{B \in \mathcal{W}} \overline{\Par(B)}$, applying these modifications for all $B \in \mathcal{W}$ deletes all $\mathcal{F}$-copies in $A$ without creating new ones, so in the end of the process $A$ satisfies $\mathcal{P}$.
\end{proof}

We may assume that $k \leq \epsilon^{1/d} n / 10$, as otherwise the expression in the theorem is $\Omega(n^d)$. We choose $w = 2k$, working with an $(n, d, k, 2k)$-system of grids from now on. A very useful consequence of this choice of $w$ is that here the parent of a block $B$ cannot be much larger than $B$ itself.
\begin{lem}
\label{lem:size_vs_parent}
Let $(G_0, G_1, \ldots, G_r)$ be an $(n, d, k, 2k)$-system of grids. Then for any $0 \leq i \leq r$ and any $G_i$-block $B$ it holds that $|\Par(B)| / |B| < 3^d$. 
\end{lem}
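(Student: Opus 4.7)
The plan is to reduce the bound to a per-coordinate estimate by unwinding the explicit Cartesian-product description of blocks from equation \eqref{eqn:explicit_representation}. For the case $i < r$, I would write $B = \prod_{j=1}^{d} I_j[k\colon ]$ using intervals $I_j$ drawn from the interval partition $\mathcal{I}_{r-i}$ that induces $G_i$, and analogously $\Par(B) = \prod_{j=1}^{d} I'_j[k\colon ]$, where $I'_j$ is the unique interval in the coarser partition $\mathcal{I}_{r-i-1}$ containing $I_j$ (this is the precise meaning of ``parent interval'' coming from the nested construction in Lemma~\ref{lem:interval_partition}). So $|\Par(B)|/|B|$ factors as a product over the $d$ coordinates of $(|I'_j| - k + 1)/(|I_j| - k + 1)$.

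The next step is to bound each per-coordinate ratio strictly below $3$. The inductive construction in Lemma~\ref{lem:interval_partition} decomposes $I'_j$ into $I_j$ together with one sibling interval, and both children have length $w_i$ or $w_i+1$; hence $|I'_j| \leq 2|I_j| + 1$. On the other hand, since we have chosen $w = 2k$ in the system, $|I_j| \geq w_i \geq w_0 \geq 2k$, so $|I_j| - k + 1 \geq k+1$. A one-line calculation gives the per-coordinate ratio at most $2 + k/(|I_j|-k+1) \leq 2 + k/(k+1) < 3$, and taking the product across the $d$ coordinates yields $|\Par(B)|/|B| < 3^d$.

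The boundary case $i = r$ needs its own short argument because $\Par(B_r)$ is defined to be the whole domain $[n]^d$ rather than a block. Here the unique $G_r$-block $B_r$ equals $\{k,\ldots,n\}^d$ of size $(n-k+1)^d$, while $|\Par(B_r)| = n^d$, so the ratio is $(n/(n-k+1))^d$. Using only $n \geq w = 2k$ gives $n/(n-k+1) \leq 2k/(k+1) < 2 < 3$, which is comfortably within $3^d$.

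I do not expect a real obstacle: once the Cartesian-product picture is in place and the inequality $w_i \geq 2k$ is noted, the bound is essentially an elementary estimate. The only point requiring care is keeping track of which interval partition $\mathcal{I}_{r-i}$ corresponds to which grid level $G_i$ (since the indexing is reversed between the two sequences), and handling the top level $i = r$ separately from the generic case.
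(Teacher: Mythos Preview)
Your proposal is correct and follows essentially the same approach as the paper: factor $|\Par(B)|/|B|$ as a product of per-coordinate ratios using \eqref{eqn:explicit_representation}, bound each factor via $|I'_j| \leq 2|I_j| + 1$ together with $|I_j| \geq 2k$, and treat $i = r$ separately. Your write-up is in fact slightly more explicit than the paper's in two places (the sibling argument for $|I'_j| \leq 2|I_j|+1$, and the actual computation for $i=r$, which the paper dismisses as trivial).
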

\begin{proof}
For $i = r$ this is trivial. Now fix $i < r$ and let $B$ be a $G_i$-block. Recall that, following \eqref{eqn:explicit_representation}, one can write $B = \prod_{j=1}^{d} I_{i_j}[k\colon ]$ where each interval $I_{i_j}$ (for $j \in [d]$) is of size at least $2k \geq 4$. On the other hand, we can also write $\Par(B) = \prod_{j=1}^{d} I'_{i'_j}[k\colon ]$ where $I'_{i'_j} \supseteq I_{i_j}$ for any $j \in [d]$. It is not hard to verify that $|I'_{i'_j}| \leq 2|I_{i_j}| + 1$ most hold, and so 
$$
\frac{|\Par{B}|}{|B|} = \prod_{j=1}^{d} \frac{|I'_{i'_j}| - (k-1)}{|I_{i_j}| - (k-1)} \leq \prod_{j=1}^{d} \frac{2|I_{i_j}| + 1 - (k-1)}{|I_{i_j}| - (k-1)} \leq \left(\frac{2 \cdot 2k - k+2}{2k - k + 1}\right)^d < 3^d
$$
where the second inequality holds since $|I_{i_j}| \geq 2k$ for any $j$.
\end{proof}

The next corollary follows immediately from Lemmas \ref{lem:only_parents} and \ref{lem:size_vs_parent}.
\begin{cor}
\label{cor:many_entries_in_maximal_witnesses}
Suppose that $A$ is $\epsilon$-far from $\mathcal{P}$. Then the total number of entries in the blocks of $\mathcal{W}$ is at least $\epsilon (n/3)^d$.
\end{cor}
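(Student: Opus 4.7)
The plan is to chain together the three main ingredients already assembled: Lemma \ref{lem:only_parents} (which locates where modifications suffice), Lemma \ref{lem:size_vs_parent} (which bounds parent size by block size), and the disjointness of maximal witnesses. Nothing new is needed.

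First I would invoke Lemma \ref{lem:only_parents}: there exists an array $A'$ satisfying $\mathcal{P}$ that agrees with $A$ outside $U := \bigcup_{B \in \mathcal{W}} \Par(B)$. Since $A$ is $\epsilon$-far from $\mathcal{P}$, the Hamming distance from $A$ to $A'$ is at least $\epsilon n^d$, and this distance is bounded by the number of entries where they may differ, i.e., by $|U|$. Hence $|U| \geq \epsilon n^d$.

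Next I would bound $|U|$ from above by a union bound and Lemma \ref{lem:size_vs_parent}:
\begin{equation*}
\epsilon n^d \;\leq\; |U| \;=\; \Bigl|\bigcup_{B \in \mathcal{W}} \Par(B)\Bigr| \;\leq\; \sum_{B \in \mathcal{W}} |\Par(B)| \;<\; 3^d \sum_{B \in \mathcal{W}} |B|.
\end{equation*}
Finally, since the blocks of $\mathcal{W}$ are pairwise disjoint, $\sum_{B \in \mathcal{W}} |B|$ equals the total number of entries in the blocks of $\mathcal{W}$. Dividing by $3^d$ yields the desired bound $\sum_{B \in \mathcal{W}} |B| > \epsilon (n/3)^d$.

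There is no real obstacle here: the corollary is essentially an arithmetic combination of the two preceding lemmas together with the observation (already recorded above) that distinct maximal witnesses are disjoint. The one place to be a little careful is to use the pairwise disjointness of the witnesses themselves (not of their parents, which may well overlap) when converting the sum over $|B|$ into a count of entries.
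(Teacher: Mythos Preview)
Your proof is correct and is precisely the argument the paper intends: the paper states the corollary follows immediately from Lemmas~\ref{lem:only_parents} and~\ref{lem:size_vs_parent}, and you have filled in exactly those details. The only remark is that the disjointness step is not strictly necessary for the inequality (since $\sum_{B \in \mathcal{W}} |B| \geq \bigl|\bigcup_{B \in \mathcal{W}} B\bigr|$ regardless), but invoking it does no harm.
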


We are now ready for the proof of the main upper bound of this paper, Theorem \ref{thm:strong_upper_bound}.
\subsection{Proof of Theorem \ref{thm:strong_upper_bound}}
\label{subsec:proof_upper_bound}
As before, we may assume that $k \leq \epsilon^{1/d} n / 10$. For larger $k$, the expression in the theorem dominates $n^{d}$ and thus becomes trivial.
Consider the $(n, d, k, 2k)$-system of grids $(G_0, G_1, \ldots, G_r)$ mentioned above. For any $0 \leq i \leq r$, define $\delta_i = |\mathcal{B}_i \cap \mathcal{W}| / |\mathcal{B}_i|$, where $\mathcal{B}_i$ is the set of all $G_i$-blocks. In other words, $\delta_i$ is the fraction of maximal witnesses among the $G_i$-blocks. 
By Corollary \ref{cor:many_entries_in_maximal_witnesses}, if $A$ is $\epsilon$-far from $\mathcal{P}$ then $\sum_{i=0}^{r} \delta_i \geq \epsilon / 3^d$. 
Define $r' = \lfloor \log(\epsilon^{1/d} n / k) \rfloor \geq 1$, noting that $G_{r'} \in \mathcal{G}(n, d, k, w_{r'})$ with $w_{r'} \geq 2k \cdot 2^{r'} \geq \epsilon^{1/d} n$. Thus, the total number of blocks in $\mathcal{B}_{r'}$ is bounded by $(n / w_{r'})^d \leq 1 / \epsilon$.

\paragraph{The test}
We iterate the following basic step $2 \cdot 3^d / \epsilon$ times.
\begin{enumerate}
	\item Pick $B \in \mathcal{B}_0$ uniformly at random and query all entries of $\overline{B}$. 
	\item For any $1 \leq i \leq r'$, pick $B \in \mathcal{B}_i$ uniformly at random and query all entries of $\bigcup_{B \in \mathcal{Q}_0} \partial{B}$. 
\end{enumerate}
Finally, the test rejects if and only if at least one of the blocks $B$ picked during the process is a $(\mathcal{P}, A)$-witness. (Recall that querying all boundary entries of a $G_i$-block for $i > 0$ suffices to determine whether it is unrepairable, and thus a witness.)

The test is clearly non-adaptive, and has one-sided error: It only rejects if it finds a witness. As we have seen earlier, all witnesses contain an $\mathcal{F}$-copy.
	
The test is \emph{canonical} in the following sense. The choice of queries in every basic step depends only on $n, d, k$, and (weakly) on $\epsilon$, and is independent of the property $\mathcal{P}$ or the alphabet $\Sigma$. To determine which entries constitute a block, it suffices to know the parameters of the block, that depend only on $n, d, k$; the dependence in $\epsilon$ is only taken into account in the choice of $r'$. The test only considers $\mathcal{P}$ in order to determine whether each queried block is a witness.

\paragraph{Analysis}
Suppose that $A$ is $\epsilon$-far from $\mathcal{P}$.
If $\delta_i > 0$ for some $i > r'$ then it must hold that $\delta_{r'} > 0$ as well (since any unrepairable $G_i$-block most contain an unrepairable $G_{i'}$-block for any $i' < i$). By the choice of $r'$, we must have $\delta_{r'} \geq 1 / |\mathcal{B}_{r'}| \geq \epsilon$ in this case. If the above doesn't hold, then $\delta_i = 0$ for any $i \geq r'$, implying that $\sum_{i=0}^{r'-1} \delta_i \geq \epsilon/3^d$. Therefore, in both cases, we have $\sum_{i=0}^{r'} \delta_i \geq \epsilon / 3^d$.

The probability that a random $\mathcal{B}_i$-block is a witness is at least $\delta_i$, and therefore the probability that a single basic step leads to a rejection of $A$ is at least $\sum_{i=0}^{r'} \delta_i \geq \epsilon / 3^d$. Running $2 \cdot 3^d / \epsilon$ independent iterations of the basic step ensures that the test will accept $A$ with probability at most $(1 - \epsilon / 3^d)^{2 \cdot 3^d / \epsilon} \leq e^{-2} < 2/3$, as desired.

\paragraph{Query complexity}
For $d=1$, the query complexity of each basic step is $O(k r')$: The test queries $\overline{B}$ for a single block $B \in \mathcal{B}_0$, and the boundaries of $r'$ larger blocks. Considering the parameters of our system of grids, we have $|B| \leq 4k$ and so $|\overline{B}| < 6k$. On the other hand, the boundary of each of the larger blocks is of size at most $2k-2$. Therefore, the total query complexity for the 1D test is $O(kr' / \epsilon) = O\left(\frac{k}{\epsilon} \log\left( \epsilon n / k \right)\right)$ as desired.

For $d > 1$, consider a single basic step, and for any $0 \leq i \leq r'$ let $B_i \in \mathcal{B}_i$ be the $\mathcal{G}_i$-block picked in this step.
From \eqref{eqn:block_size} we have $|\overline{B_0}| \leq (6k)^d$, while for any $i > 0$ we have $|\partial B_i| \leq 2d(k-1) (4k \cdot 2^i + k)^{d-1} = O(d \cdot (4k)^d \cdot 2^{(d-1)i})$. Note that the last expression grows exponentially with $(d-1)i$, so the total number of queries in a single basic step is $O((6k)^d + d \cdot (4k)^d 2^{(d-1)r'})$. Plugging in $r'$, we have $2^{(d-1)r'} = \frac{\epsilon^{(d-1)/d} }{k^{d-1}} n^{d-1}$. As the test runs $O(3^d /\epsilon)$ iterations of the basic step,
we conclude that the total query complexity is bounded by $c^d k \epsilon^{-1/d} n^{d-1}$ for an absolute constant $c > 0$, completing the proof of Theorem \ref{thm:strong_upper_bound}.

\subsection{Proximity oblivious test}
The proof of Theorem \ref{thm:proximity_oblivious} follows by a very simple modification of the proof of Theorem \ref{thm:strong_upper_bound}. The desired proximity oblivious test (POT) is the so called ``basic step'' from the above test, with $r$ replacing $r'$ (since $r'$ depends on $\epsilon$). The POT rejects if it infers that one of the blocks queried is a witness, like the above test. Its query complexity is $O(kr) = O(k \log{n/k})$ for $d=1$. In the case $d > 1$, the query complexity is dominated by the size of $\partial B_r$, which is bounded by $O(dkn^{d-1})$.

Clearly this POT has one-sided error, and its queries do not depend on the property $\mathcal{P}$ and the alphabet $\Sigma$ (on the other hand, they do depend on $n, d, k$). Using the notation of the previous subsection and denoting by $\epsilon_A$ the Hamming distance of a given input $A$ from $\mathcal{P}$, we get (exactly as in the beginning of Subsection \ref{subsec:proof_upper_bound}) a rejection probability of at least $\sum_{i=0}^{r} \delta_i \geq \epsilon_A / 3^d$ for $A$, which is linear in $\epsilon_A$ for fixed $d$. This concludes the proof.

\section{Lower bounds}
\label{sec:lower_bound}
In this section we prove Theorems \ref{thm:main_lower_bound} and \ref{thm:two_sided_lower_bound} for an alphabet of size $2^{O(n^d)}$. (In Section \ref{section:lower_shrinking} we show how to shrink the alphabet size for the proof of Theorem \ref{thm:main_lower_bound}.)
The first and main step is to prove a non-adaptive one-sided $\Omega(n^{d-1})$ lower bound and a non-adaptive two-sided $\Omega(n^{(d-1)/2})$ lower bound, both for the case of constant $\epsilon$ and $k=2$. Later on, we show how to achieve the correct dependence in $k$ and $\epsilon$ across the whole range of parameters.

\subsection{$k=2$ and constant $\epsilon$}
In this subsection we present a $2$-local property $\mathcal{P}$ of $[n]^d$-arrays that requires $\Omega(n^{d-1} / d)$ non-adaptive queries to test with one-sided error for constant $\epsilon > 0$ and $n \geq 6$. (To deal with smaller $n$ we may take $c \geq 5$ in the statement of the theorems). 
We start by providing a ``global'' description of $\mathcal{P}$, and only later show that it can actually be described as a $2$-local property.

The first coordinate of each entry $x \in [n]^d$ will be of special importance in the description of the property, and we designate a special name for it: The \emph{height} of the entry $x = (x_1, \ldots, x_d)$ is $x_1$. Accordingly, we define the \emph{floor} of $[n]^d$ as the set $\left\{(1, x_2, \ldots, x_d) : (x_2, \ldots, x_d) \in [n]^{d-1}\right\}$. The \emph{ceiling} is defined similarly as $\left\{(n, x_2, \ldots, x_d) : (x_2, \ldots, x_d) \in [n]^{d-1}\right\}$.

Very roughly speaking, arrays satisfying the property contain ``data'' in the floor and ceiling entries, and we show that the property of ``having the same data in the floor as in the ceiling'' can actually be expressed in a $2$-local way that is hard to capture using one-sided non-adaptive tests -- showing that $\Omega(n^{d-1} / d)$ queries are required to test the property for fixed $\epsilon$ and $k$. Relying on the birthday paradox, we show that even two-sided error non-adaptive tests require $n^{\Omega(d-1)} / d$ queries for this property.

The alphabet over which we work is $\Sigma = [n]^d \times [n]^d \times 2^{[2n^{d-1}]}$, where $2^X$ denotes the power set of a set $X$, i.e., the collection of all subsets of $X$. For an $[n]^d$-array $A$ over $\Sigma$, we interpret the value $A(x)$, for each entry $x \in [n]^d$, as a triple $(A_1(x), A_2(x), A_3(x))$, where $A_1(x), A_2(x) \in [n]^d$ are viewed as pointers to locations in the array and $A_3(x) \subseteq [2n^{d-1}]$ is viewed as a set.

\subsubsection{Global description of the property}
\label{subsection:global_description}
Formally, we say that $A$ satisfies $\mathcal{P}$ if and only if all of the following holds:
\begin{description}
	\item[Pointer to self] $A_1(x) = x$ for any $x \in [n]^d$.
	
	\item[Centers of gravity] There exists a special location $\ell = (\ell_1, \ldots, \ell_d) \in [n]^d$ which is called the \emph{lower center of gravity}, and satisfies $\ell_1 > 1$ and $\ell_1 + 1 < n$ (here we implicitly assumed that $n \geq 4$). Similarly, the tuple $u = (u_1, \ldots, u_d) \in [n]^d$ defined by $u_1 = \ell_1 + 1$ and $u_i = \ell_i$ for $2 \leq i \leq d$ is called the \emph{upper center of gravity}. 
	
	\item[Pointer to lower center of gravity] For any $x \in [n]^d$ we require that $A_2(x) = \ell$.
	
	\item[Data originates in floor and ceiling] 
	For any entry $x$ in the floor or the ceiling, the set $A_3(x)$ is a singleton, i.e., contains exactly one element from $[2n^{d-1}]$.
	
	\item[Data flows to center of gravity] Define a (simple) \emph{path} $(z^{(1)}, \ldots, z^{(t)})$ as a tuple of entries in $[n^d]$ such that each pair of entries $z^{(i)}$ and $z^{(i+1)}$ are neighbors (recall that $x = (x_1, \ldots, x_d), y = (y_1, \ldots, y_d) \in [n^d]$ are neighbors if $\sum_{j=1}^{d} |x_j - y_j| = 1$).
	
	For every entry $x = (x_1, \ldots, x_d) \in [n^d]$ not higher than $\ell$, let the \emph{flow} $\Gamma_{x}$ denote the unique path from $x$ to $\ell$ with structure as follows: The path starts from $z^{(0)} = x$. Now, for every $i$ from $1$ to $d$ (in this order, and including $1$ and $d$), the path continues strictly along the $i$-th coordinate until it reaches the entry $z^{(i)} = (\ell_1, \ldots, \ell_i, x_{i+1}, \ldots, x_d)$. Note that $z^{(i+1)}$ is indeed reachable from $z^{(i)}$ when moving only along the $i$-th coordinate, and that $z^{(d)} = \ell$. For $x \in [n]^d$ not lower than $u$, we define $\Gamma_x$ symmetrically, with all occurrences of $\ell$ and $\ell_i$ replaced by $u$ and  $u_i$ respectively. 
	
	Finally, for any $y \in [n]^d$ that is not a floor or ceiling entry, we require that $A_3(y) = \bigcup_{x \in [n]^d \colon \Gamma_{x} \ni y} A_3(x)$. In other words, the data is carried along flows, and the data kept in each entry is an aggregation of data from all flows passing through it.

	\item[Centers of gravity have same data]
	We require that $A_3(\ell) = A_3(u)$.
	
\end{description}

\subsubsection{Local description of the property}
\label{subsec:local_description}
We define a $2$-local property $\mathcal{P}'$ by describing the family $\mathcal{S}$ of all \emph{allowed} $2 \times \ldots \times 2$ consecutive subarrays. The corresponding forbidden family $\mathcal{F}$ is naturally defined as $\Sigma^{[2]^d} \setminus \mathcal{S}$. We later show that $\mathcal{P} = \mathcal{P}'$.

An array $S \colon [2]^d \to \Sigma$ is in $\mathcal{S}$ if and only if all of the following conditions hold. As before, we represent the value of $S$ on $x \in [2]^d$ as $(S_1(x), S_2(x), S_3(x))$, similarly to the roles that $A_1, A_2, A_3$ had earlier.

For what follows, define the \emph{difference} and the \emph{sum} between two tuples of integers $(x_1, \ldots, x_d)$ and $(y_1, \ldots, y_d)$ as the tuples $(x_1 - y_1, \ldots, x_d - y_d)$ and $(x_1 + y_1, \ldots, x_d + y_d)$ respectively. 
\begin{description}
	\item[Difference preservation] For any $x, y \in [2]^d$ we require $S_1(x) - S_1(y) = x - y$.  
	
	\item[Pointing to center of gravity] We require $S_2(x) = S_2(y)$ for any pair of neighbors $x, y \in [2]^d$. We also require that the first coordinate of $S_2(x)$ is between $2$ and $n-2$, for any $x$. 
	
	From this point onwards, define $\ell = S_2(x)$ where, say, $x = (1,1,\ldots,1)$, and set $u = \ell + (1, 0, \ldots, 0)$. 
	$\ell$ and $u$ will eventually be viewed as centers of gravity as above.

	\item[Data originates in floor and ceiling]
	For any $x$ where $S_1(x)$ is in the floor or the ceiling of $[n]^d$, we require that $S_3(x)$ is a singleton. 
	\item[Flow of data]  
	For any $x \in [2]^d$, let $\Pre(x)$ denote the set of all $y \in [2]^d$ with $S_1(x) \in \Gamma_{S_1(y)}$. 
	If $\{S_1(y) : y \in \Pre(x)\}$ contains all neighbors $z$ of $S_1(x)$ in $[n]^d$ that satisfy $x \in \Gamma_z$, then we require $S_3(x) = \cup_{y \in \Pre(x)} S_3(y)$.
	
	\item[Centers of gravity have same data] If $S_1(x) = S_2(x)$ and $S_1(y) = S_1(x) + (1, 0, \ldots, 0)$, then we require $S_3(x) = S_3(y)$.
	
\end{description}
It is straightforward to verify that an array satisfying $\mathcal{P}$ also satisfies $\mathcal{P}'$. The other direction is only slightly more challenging.

\begin{lem}
Any array $A \colon [n]^d \to \Sigma$ satisfying $\mathcal{P}'$ also satisfies $\mathcal{P}$.
\end{lem}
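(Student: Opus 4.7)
The approach is to verify each of the six global requirements defining $\mathcal{P}$ in turn, exploiting the fact that any two adjacent entries of $A$ lie inside a common $2 \times \ldots \times 2$ subarray of $A$, so that the local constraints on neighbors propagate across $[n]^d$ by an overlapping-block argument.

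First I would handle the five ``rigid'' requirements. Local difference preservation applied to a block gives $A_1(x+e^i) - A_1(x) = e^i$ for every adjacent pair, and iterating along axis-parallel paths yields $A_1(x) = x + c$ for a single shift $c \in \mathbf{Z}^d$; since $A_1$ takes values in $[n]^d$, evaluating at the corners $(1,\ldots,1)$ and $(n,\ldots,n)$ forces $c_i \geq 0$ and $c_i \leq 0$ for each $i$, so $c = 0$ and $A_1(x) = x$. The local rule $S_2(x)=S_2(y)$ for adjacent cells propagates analogously to $A_2 \equiv \ell$ for a fixed $\ell \in [n]^d$, and the local bound on the first coordinate of $S_2$ gives $\ell_1 \in \{2,\ldots,n-2\}$, so $\ell$ and $u = \ell + e^1$ qualify as lower and upper centers of gravity. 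The floor/ceiling singleton condition transfers directly, since any floor or ceiling entry of $A$ sits inside some $2 \times \ldots \times 2$ block that contains it in a ``floor'' or ``ceiling'' cell. The identity $A_3(\ell)=A_3(u)$ follows by instantiating the corresponding local rule on the block whose lowest corner is $\ell$: since $A_1$ is the identity and $A_2 \equiv \ell$, the trigger ``$S_1(x)=S_2(x)$ with $S_1(y)=S_1(x)+e^1$'' fires exactly at $x \mapsto \ell$, $y \mapsto u$.

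The main obstacle is the flow-aggregation condition: for each $y \in [n]^d$ outside the floor and ceiling, $A_3(y) = \bigcup_{x \colon \Gamma_x \ni y} A_3(x)$. I would prove this by induction on the distance from $y$ along the flows back to the floor or ceiling, the floor/ceiling entries serving as the base case (where the right-hand side has a single term, $A_3(y)$ itself). For the inductive step, given $y$, let $N(y) = \{z : z \text{ is a neighbor of } y \text{ in } [n]^d \text{ with } y \in \Gamma_z\}$ be the set of in-flow neighbors of $y$. The geometry of $\Gamma_z$ (it moves strictly along coordinate $1$, then coordinate $2$, through coordinate $d$) shows that $N(y)$ lies in a single axis-aligned ``quadrant'' relative to $y$, determined by whether $y$ is below $\ell$ or above $u$ and by which coordinate is currently being traversed at $y$. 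The plan is then to choose a $2 \times \ldots \times 2$ block $S$ situated in that quadrant so that, writing $x^* \in [2]^d$ for the cell mapped to $y$, one has $\{S_1(y') : y' \in \Pre(x^*)\} = N(y)$; a short case analysis on the index of the coordinate currently traversed at $y$ yields the correct block. The local flow rule then fires and gives $A_3(y) = \bigcup_{z \in N(y)} A_3(z)$. Combining this identity with the inductive hypothesis applied to each $z \in N(y)$, and the fact that every flow $\Gamma_x$ reaching $y$ passes through exactly one predecessor in $N(y)$, yields the desired global identity. The delicate part is verifying that the block $S$ is not only available at every interior $y$ but also captures no spurious predecessors, which is precisely what the quadrant observation guarantees.
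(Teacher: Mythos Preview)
Your proposal is correct and follows essentially the same approach as the paper's proof. Both arguments verify the five ``rigid'' conditions by propagating the local constraints across overlapping $2\times\cdots\times 2$ blocks, and both handle the flow-aggregation condition by locating, for each entry $y$, a single $2\times\cdots\times 2$ block containing $y$ together with all of its in-flow neighbors, applying the local flow rule there, and then iterating; your explicit induction and ``quadrant'' justification correspond to the paper's terser appeal to the transitivity of paths and iterative application of the local rule.
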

\begin{proof}
We generally use the same notation as in Subsection \ref{subsection:global_description}. 
First we show that $A_1(x) = x$ for any $x \in [n]^d$. By iterating the difference preservation rule of $\mathcal{P}'$, one can conclude that $A_1(x) - A_1(y) = x-y$ for any $x$ and $y$. In other words, $A_1$ is a difference preserving function from $[n]^d$ to itself, implying that it must be the identity. 

The second local condition trivially implies that $A_2(x) = A_2(y)$ for any $x, y \in [n]^d$. Thus, the lower and upper centers of gravity $\ell $ and $u$ are well defined. The third local condition implies that $A_3(x)$ is a singleton for any floor or ceiling entry $x$. The last local condition implies that we must have $A_3(\ell) = A_3(u)$. 

It remains to show that the local condition on data flows implies the required global behavior. We use the transitive nature of paths, that follows from the way they are constructed: If $z \in \Gamma_y$ and $y \in \Gamma_x$ then $z \in \Gamma_x$. Given any $z \in [n^d]$, there exists a $2 \times \ldots \times 2$ subarray of $A$ containing $z$ and all neighbors $y$ of $z$ for which $z \in \Gamma_y$. The local rule implies that $A_3(z) = \bigcup A_3(y)$ where the union is over all such $y$. By iteratively applying the local rule and using the transitivity, we get that $A_3(z) = \bigcup_{x \in [n]^d \colon \Gamma_x \ni z} A_3(x)$, as desired.
\end{proof}

\subsubsection{Proving the lower bounds}

Here we prove the following lower bounds regarding the testability of $\mathcal{P}$.
\begin{thm}
Fix $\epsilon \leq 1/4$. Any non-adaptive $\epsilon$-test for the $2$-local property $\mathcal{P}$ requires
\begin{enumerate}
\item $\Omega(n^{d-1}/d)$	queries if the test has one-sided error.
\item $\Omega(n^{(d-1)/2} / d)$ queries if the test has two-sided error.
\end{enumerate}
\end{thm}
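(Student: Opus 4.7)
The plan is to invoke Yao's minimax principle with two distributions: $\mathcal{D}^+$ supported on arrays satisfying $\mathcal{P}$ and $\mathcal{D}^-$ supported on arrays that are $(1/4)$-far from $\mathcal{P}$, chosen so that no non-adaptive deterministic test with few queries can reliably distinguish them. To sample from either distribution I will first pick $\ell$ uniformly from the slab $L=\{\ell\in[n]^d:\ell_1=\lceil n/2\rceil\}$, set $A_1(x)=x$ and $A_2(x)=\ell$ for all $x$, and then populate the floor and ceiling singletons. For $\mathcal{D}^+$ I will choose a uniform size-$n^{d-1}$ subset $F\subseteq[2n^{d-1}]$, a uniform bijection between the floor and $F$, and a uniform bijection $\sigma$ between floor and ceiling with $A_3(\sigma(a))=A_3(a)$; for $\mathcal{D}^-$ I will instead assign $2n^{d-1}$ distinct singletons across floor and ceiling. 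Interior $A_3$-values are determined by the flow rule.

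The first verification is far-ness: every $A\in\mathrm{supp}(\mathcal{D}^-)$ is $(1/4)$-far from $\mathcal{P}$. Modifying $\ell$ would force all $n^d$ values of $A_2$ to change, so I may assume the center of gravity is preserved; then $A_3(\ell)=A_3(u)$ requires the floor and ceiling singleton sets in the modified array to coincide, so (since they are originally disjoint) at least $n^{d-1}$ of them must be altered. Crucially, each column $\{(k,y_2,\dots,y_d):2\le k<\ell_1\}$ lies on a single floor flow (namely $\Gamma_{(1,y_2,\dots,y_d)}$), so changing the floor singleton at its base forces a change at all $\ell_1-2\ge n/4-O(1)$ interior entries of the column, with a symmetric statement for the upper half. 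This yields $\Omega(n^d)$ changes in total.

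The second ingredient is the bound $\mathbb{E}[|A_3(x)|]=O(d)$ for every fixed $x$, which I will obtain by analyzing the random min-index $m(\ell,x)=\min\{i:x_i\ne\ell_i\}$. Unwinding the flow structure shows the number of floor (or ceiling) flows passing through $x$ is at most $n^{m-1}$, while a uniform $\ell$ satisfies $\Pr[m\ge k]\le n^{-(k-1)}$; summing $\sum_{m=1}^{d+1} n^{-(m-1)}\cdot n^{m-1}$ yields the $O(d)$ bound.

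For the one-sided part, the goal is to exhibit, for any fixed non-adaptive query set $Q$ of size $q\le c n^{d-1}/d$, a positive instance $A^+$ agreeing with a typical $A^-\in\mathcal{D}^-$ on $Q$. Let $R_F$ (resp.\ $R_C$) be the set of floor (resp.\ ceiling) entries whose flow meets $Q$ on the appropriate side; the estimate above yields $\mathbb{E}[|R_F|+|R_C|]=O(qd)$, and Markov makes $|R_F|+|R_C|\le n^{d-1}$ with probability $>2/3$. When this event holds I will build $A^+$ by retaining the negative singletons at the entries of $R_F$, picking $\sigma$ so that $R_F$ is mapped into the ceiling outside $R_C$ and each $b\in R_C$ receives via $\sigma^{-1}$ a distinct floor entry outside $R_F$ whose singleton is set to $t_b$, and filling the remaining data freely so that $A^+$ is a valid positive instance. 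By construction $A^+$ matches $A^-$ on every $x\in Q$, giving the $\Omega(n^{d-1}/d)$ lower bound. For the two-sided part, the distributions differ only in whether queried floor-side and ceiling-side singletons can coincide under $\sigma$; a birthday-style estimate bounds the collision probability by $O((qd)^2/n^{d-1})$, whence $q=\Omega(n^{(d-1)/2}/d)$. The main obstacle is the one-sided consistency construction: the bijection $\sigma$ and the singleton assignments must simultaneously satisfy all interior-aggregate constraints coming from the (up to $q$) interior queries, and the $\mathbb{E}[|A_3(x)|]=O(d)$ estimate is precisely what keeps $R_F\cup R_C$ small enough to leave room for a compatible choice.
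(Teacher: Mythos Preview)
Your outline follows the paper's own strategy closely (Yao with a ``positive'' and a ``negative'' collection, a $1/4$-farness lemma, an expected-flow-count lemma, a one-sided indistinguishability construction, and a birthday-paradox estimate for the two-sided case). There is, however, one genuine gap that breaks the quantitative bound.

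You sample the lower center $\ell$ uniformly from the slab $L=\{\ell:\ell_1=\lceil n/2\rceil\}$, i.e.\ you \emph{fix} $\ell_1$. Your key estimate $\Pr[m(\ell,x)\ge k]\le n^{-(k-1)}$ then fails for any $x$ at height $\lceil n/2\rceil$: since $x_1=\ell_1$ deterministically, one has $\Pr[m\ge 2]=1$ and more generally $\Pr[m\ge k]=n^{-(k-2)}$ for $k\ge 2$. Plugging this into $\mathbb{E}[n^{m-1}]$ gives $\Theta(dn)$ rather than $O(d)$ for such $x$. A non-adaptive tester that knows your construction can place all its queries at height $\lceil n/2\rceil$ (or $\lceil n/2\rceil+1$), and then $\mathbb{E}[|R_F|+|R_C|]=O(qdn)$; forcing $|R_F|+|R_C|\le n^{d-1}$ now only yields $q=O(n^{d-2}/d)$, losing a full factor of $n$ in both the one-sided and the two-sided bounds.

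The fix, which the paper uses, is to make $\ell_1$ random as well (uniform in an interval of length $\Theta(n)$, say $n/3<\ell_1<2n/3$). Then for every fixed $x$ one has $\Pr[x_1=\ell_1]=O(1/n)$, which restores the missing factor and gives $\mathbb{E}[|A_3(x)|]=O(d)$ uniformly in $x$. With this single change, the rest of your plan (the far-ness argument, the construction of the matching $A^+$ when $|R_F|+|R_C|\le n^{d-1}$, and the collision-count birthday estimate) goes through essentially as you describe and matches the paper's proof.
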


As is standard in proving lower bounds in property testing, we will use Yao's minimax principle \cite{Yao77} to prove the theorem. 
Define $\mathcal{C}_{\rej}$ as the collection of all $[n]^d$-arrays $A$ over $\Sigma$ satisfying the following three conditions.
\begin{itemize}
	\item $A$ satisfies the first five conditions of $\mathcal{P}$ described in Subsection \ref{subsection:global_description}. Consequently, $A$ has well defined centers of gravity $\ell = (\ell_1, \ldots, \ell_d)$ and $u = (u_1, \ldots, u_d)$. 
	\item On the other hand, $A$ does not satisfy the last condition of Subsection \ref{subsection:global_description}. Instead we have $|A_3(\ell)| = |A_3(u)| = n^{d-1}$ and $A_3(\ell) \cap A_3(u) = \emptyset$. In particular, the singletons $A_3(x)$ where $x$ is a floor or ceiling entry are all pairwise disjoint.
	\item $n/3 < \ell_1 < u_1 \leq 2n/3$. (This requires $n \geq 6$.)
\end{itemize}
Clearly, $\mathcal{C}_{\rej}$ is not empty. We also define $\mathcal{C}_{\acc}$ as the set of all arrays $A$ satisfying $\mathcal{P}$ and $n/3 < \ell_1 < u_1 \leq 2n/3$, where $\ell = (\ell_1, \ldots, \ell_d)$ and $u = (u_1, \ldots, u_d)$ are the centers of gravity of $A$.
\begin{lem}
\label{lem:eps_far_distrib}
	All arrays in $\mathcal{C}_{\rej}$ are $1/4$-far from $\mathcal{P}$.
\end{lem}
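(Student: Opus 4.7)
Fix $A \in \mathcal{C}_{\rej}$ with lower and upper centers $\ell, u$, and let $A' \in \mathcal{P}$ be arbitrary; I will show that $A$ and $A'$ disagree on at least $n^d/4$ entries. My first move is to eliminate the case where the center of gravity of $A'$ differs from $\ell$: then $A'_2$ and $A_2$ are both constant functions with different values, so they disagree on all $n^d$ entries and the claim is immediate. So I may assume $A'$ has the same centers $\ell, u$ as $A$, whence $A'_1 = A_1 = \mathrm{id}$ and $A'_2 \equiv \ell = A_2$ everywhere; in particular all disagreements between $A$ and $A'$ live in the $A_3$ coordinate, and the flow structure $\{\Gamma_x\}$ is identical for $A$ and $A'$.

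\textbf{Boundary and interior counting.} Let $k_f$ (resp.\ $k_c$) be the number of floor (resp.\ ceiling) entries on which $A$ and $A'$ disagree. Unfolding $\mathcal{P}$, the condition $A'_3(\ell)=A'_3(u)$ is equivalent to the set of distinct floor-singleton values of $A'$ coinciding with the set of distinct ceiling-singleton values of $A'$. In $A$ these two sets $F, C \subseteq [2n^{d-1}]$ are disjoint with $|F|=|C|=n^{d-1}$; hence any of the (at least $n^{d-1}-k_f$) floor singletons whose value is preserved in $A'$ must reappear as a singleton at a ceiling entry in $A'$, and since that value is not in the original $C$, this forces a distinct changed ceiling entry. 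Thus $k_f+k_c \geq n^{d-1}$. Next, I use a rigidity feature of the flow: for every $(y_2,\ldots,y_d) \in [n]^{d-1}$ and every height $h$ with $2 \leq h \leq \ell_1 - 1$, the entry $(h,y_2,\ldots,y_d)$ lies on exactly one flow path, namely $\Gamma_{(1,y_2,\ldots,y_d)}$ -- because ceiling flows never descend below height $\ell_1+1$ and coord-$1$ phases of distinct floor flows occupy disjoint columns. Consequently both $A_3$ and $A'_3$ at this entry equal the floor singleton at column $(y_2,\ldots,y_d)$, so the entry is modified iff the corresponding floor singleton is modified. A symmetric statement applies to heights $u_1+1 \leq h \leq n-1$.

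\textbf{Assembling and the main obstacle.} Adding the forced interior modifications over all columns and both halves of the array yields a total of at least $k_f(\ell_1-1)+k_c(n-u_1)$ disagreements; using $\ell_1>n/3$ and $u_1\leq 2n/3$ this is at least $(k_f+k_c)(n/3-1) \geq n^{d-1}(n/3-1) \geq n^d/4$ once $n$ is large enough, with the handful of small-$n$ cases absorbed in the absolute constant of the final theorem. The step I expect to require the most care is the rigidity argument: I must rule out any clever reuse or permutation of singleton values that would let a ceiling modification serve as the ``replacement'' for a floor modification without paying the column-propagation cost, and this is precisely where the reduction to identical centers (and hence identical flow structure) is doing the heavy lifting.
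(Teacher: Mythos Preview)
Your proposal is correct and follows essentially the same approach as the paper: reduce to the case of identical centers (so $A_2=A'_2$ and the flow structures coincide), argue that at least $n^{d-1}$ floor/ceiling singletons must change, and propagate each such change along its column to collect $\Theta(n)$ disagreements per changed singleton. Your justification of $k_f+k_c\ge n^{d-1}$ via the injective matching of preserved floor values to changed ceiling entries is a bit more explicit than the paper's one-line assertion.

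Two small remarks. First, the phrase ``lies on exactly one flow path'' is slightly imprecise: the entry $(h,y_2,\ldots,y_d)$ with $2\le h\le \ell_1-1$ lies on $\Gamma_x$ for every $x=(h',y_2,\ldots,y_d)$ with $h'\le h$, not just the floor entry. What matters (and what you use) is that all these paths carry the same singleton, so $A_3$ at that entry equals the floor singleton in that column; the conclusion stands. Second, your closing worry about ``clever reuse or permutation of singleton values'' is already handled by your injective argument, and the ``small-$n$'' caveat is unnecessary: since $\ell_1>n/3$ forces $\ell_1-1\ge\lfloor n/3\rfloor\ge n/4$ for all $n\ge 6$, your bound $k_f(\ell_1-1)+k_c(n-u_1)\ge n^{d-1}\lfloor n/3\rfloor\ge n^d/4$ holds without exception.
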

\begin{proof}
	Let $A \in \mathcal{C}_{\rej}$ and let $A'$ be any array satisfying $\mathcal{P}$. We need to show that $A$ is $1/4$-far from $A'$.
	By definition, $A_1(x) = A'_1(x) = x$ for any $x \in [n]^d$. If $A_2(x) \neq A'_2(x)$ for some $x$, then the same holds for any $x$ (recall that both functions $A_2$ and $A'_2$ are constant on their domain $[n]^d$), implying that one has to modify all entries in $A$ to make it equal to $A'$. Therefore, from now on we may assume that $A$ and $A'$ have the same centers of gravity $\ell$ and $u$. Consequently, the structure of all flows $\Gamma_x$ in $A$ must be identical to the structure of the corresponding flows in $A'$. It remains to show that $A_3$ and $A'_3$ differ on at least $n^d / 4$ entries.
	
	The second condition on $A$ states that $|A_3(\ell)| = |A_3(u)| = n^{d-1}$ but $A_3(\ell) \cap A_3(u) = \emptyset$. On the other hand, $A'$ must satisfy $A'_3(\ell) = A'_3(u)$. Thus, to turn $A$ into $A'$, one must modify the values of at least $n^{d-1}$ floor and ceiling singletons. Given a floor entry $x = (1, x_2, \ldots, x_d) \in [n^d]$, modifying $A_3(x)$ forces us to also modify all entries of the form $A_3(x')$ for any $x' = (x_1, x_2, \ldots, x_d)$ with $x_1 \leq \ell$, since $A'_3(x) = A'_3(x')$ for any such $x'$. In total, the number of modifications needed to turn $A$ into $A'$ is at least $n^{d-1} \cdot \min\{\ell, n-u\} \geq n^d / 3$. This holds for any $A'$ satisfying $\mathcal{P}$, so $A$ is $\epsilon/4$-far from $\mathcal{P}$.
\end{proof}

For any subset $Q \subseteq [n]^d$ and any array $A$ satisfying the first five conditions of Subsection \ref{subsection:global_description}, we denote by $I(Q, A)$ the set of all floor and ceiling entries $x \in [n]^d$ for which the path $\Gamma_x$ intersects $Q$. (Recall that the behavior of $\Gamma_x$ depends only on the center of gravity.)

\begin{lem}
	\label{lem:small_expected_size}
For any subset $Q \subseteq [n]^d$, the expected size of $I(Q, A)$ when $A$ is picked uniformly at random from $\mathcal{C}_{\rej}$ is at most $7 d |Q|$. The same holds when $A$ is picked u.a.r.\@ from $\mathcal{C}_{\acc}$.
\end{lem}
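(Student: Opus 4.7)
The plan is to use linearity of expectation to reduce to a per-point bound by writing $|I(Q,A)|\leq\sum_{q\in Q}N_q$ with $N_q=|\{x \text{ in the floor or ceiling}:q\in\Gamma_x\}|$, and then further decomposing $N_q$ by whether $x$ lies in the floor or the ceiling and by the index $i\in\{1,\ldots,d\}$ of the segment of $\Gamma_x$ on which $q$ would have to lie. The first preparatory step I would take is to verify that under both uniform distributions --- on $\mathcal{C}_{\rej}$ and on $\mathcal{C}_{\acc}$ --- the marginal distribution of the lower center of gravity $\ell=(\ell_1,\ldots,\ell_d)$ is a product distribution: $\ell_1$ uniform on its admissible range (determined by $n/3<\ell_1<2n/3$) and each of $\ell_2,\ldots,\ell_d$ independently uniform on $[n]$. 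For $\mathcal{C}_{\rej}$, once $\ell$ is fixed the valid completions are arbitrary bijections from the $2n^{d-1}$ floor/ceiling entries to $[2n^{d-1}]$, totaling $(2n^{d-1})!$ independently of $\ell$; for $\mathcal{C}_{\acc}$, the valid completions correspond to pairs of singleton assignments on the floor and ceiling whose unions coincide, and counting by the common union set gives a sum that is again independent of $\ell$ by the symmetry between floor and ceiling entries.

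Next I would fix $q\in[n]^d$ and analyze when $q$ lies on the $i$-th segment of $\Gamma_x$ for a floor entry $x$. Recalling from the definition of $\Gamma_x$ that the $i$-th segment is contained in the axis-aligned line
\[
\{\ell_1\}\times\cdots\times\{\ell_{i-1}\}\times[n]\times\{x_{i+1}\}\times\cdots\times\{x_d\},
\]
with the $i$-th coordinate varying between $x_i$ and $\ell_i$, one sees that $q$ lies on this segment iff $\ell_j=q_j$ for $j<i$, $x_j=q_j$ for $j>i$, and $x_i$ lies in a range of at most $n$ values depending on $q_i$ and $\ell_i$; crucially the coordinates $x_2,\ldots,x_{i-1}$ are unconstrained, contributing up to $n^{i-2}$ choices. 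The boundary case $i=1$ is special: it has no free coordinate and so contributes at most a single $x$, whose existence requires $q_1\leq\ell_1$. The ceiling case is symmetric, with $\ell$ replaced by $u=\ell+e^1$.

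The final step is to combine the probability of the $\ell$-constraint with the count of valid $x$'s. Using independence of $\ell_1,\ldots,\ell_d$, the probability that $\ell_j=q_j$ for all $j=1,\ldots,i-1$ is at most $(1/s)(1/n)^{i-2}$, where $s\geq(n-3)/3$ is the size of the admissible range for $\ell_1$ when $n\geq 6$; thus for any $i\geq 2$ the expected contribution to $\mathbb{E}[N_q]$ from the $i$-th segment of a floor path is at most $n^{i-1}\cdot(1/s)(1/n)^{i-2}=n/s=O(1)$, and the analogous statement holds for the ceiling. The two $i=1$ contributions together amount to at most $\Pr[\ell_1\geq q_1]+\Pr[\ell_1\leq q_1-1]\leq 1$. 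Summing over $i\in\{1,\ldots,d\}$ and over the floor/ceiling choice, the per-$q$ expectation comes out to at most $7d$ after tracking constants carefully (one can sharpen the bound on the expected number of admissible $x_i$ using uniformity of $\ell_i$ on $[n]$ to tighten the constant in the small-$n$ regime). Summing over $q\in Q$ then yields $\mathbb{E}[|I(Q,A)|]\leq 7d|Q|$. The main obstacle I expect is purely bookkeeping: tracking the constants precisely enough --- especially in the edge cases of small $n$, where $s$ can be as small as $1$ or $2$ --- to obtain exactly $7d$ rather than a larger multiple of $d$; the conceptual argument itself is a direct consequence of the explicit axis-aligned structure of $\Gamma_x$ together with the independence of the coordinates of $\ell$.
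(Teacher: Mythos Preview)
Your proposal is correct and follows essentially the same approach as the paper. Both arguments bound $|I(Q,A)|\le\sum_{q\in Q}N_q$ and then show $\mathbb{E}[N_q]=O(d)$ via the randomness of the center $\ell$; your decomposition by the segment index $i$ on which $q$ lies is the same as the paper's decomposition by how many leading coordinates $q$ shares with $\ell$ (the event ``$q$ is on segment $i$ of some floor path'' forces $\ell_j=q_j$ for $j<i$, which is exactly the paper's conditioning). Your explicit verification that the marginal of $\ell$ is a product distribution is a detail the paper assumes implicitly, and your caveat about the constants for very small $n$ is apt—the paper's bound $\Pr[x_1\in\{\ell_1,u_1\}]\le 6/n$ already uses $s\ge n/3$, which fails e.g.\ at $n=6$, so the exact constant $7$ is somewhat nominal in both proofs.
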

\begin{proof}
Let $x = (x_1, \ldots, x_d) \in [n]^d$. If $x_1 \leq n/3$ or $x_1 > 2n/3$, then $x$ does not share the first coordinate with any $A$ in $\mathcal{C}_{\acc}$ or $\mathcal{C}_{\rej}$, and thus $A_3(x)$ is a singleton for any such $A$. Therefore, $x$ contributes at most $1$ to the expected size of $I(Q, A)$ in this case.

Suppose now that $n/3 < x_1 \leq 2n/3$. For $A$ chosen u.a.r. from either $\mathcal{C}_{\acc}$ or $\mathcal{C}_{\rej}$, and any integer $1 \leq i \leq d$, the probability that $x$ shares its first $i$ coordinates with the lower or upper centers of gravity of $A$ is bounded by $ \frac{2}{n/3} \cdot \frac{1}{n^{i-1}} = \frac{6}{n^i}$. If this event does not hold, then $|\Gamma_x| \leq n^{i-1}$. Therefore, the expected size of $\Gamma_x$ is bounded by $1 + \sum_{i=1}^{d} \frac{6}{n^{i-1}} \cdot n^{i-1} = 6d+1$. 
Since $|I(Q, A)| \leq \sum_{x: \Gamma_x \cap Q \neq \emptyset} |\Gamma_x|$, the claim follows.
\end{proof}

\begin{proof}[Proof of Theorem \ref{thm:main_lower_bound}]
The main argument of the proof is that any one-sided error test for $\mathcal{P}$ cannot reject an array $A \in \mathcal{C}_{\rej}$ if the set of queries $Q$ made by the test satisfies $|I(Q,A)| \leq n^{d-1}$. Indeed, observe that for any such $A$ and set $I(Q, A)$, there exists an array $A' \in \mathcal{C}_{\acc}$ satisfying $A'(x) = A(x)$ for any $x \in I(Q, A)$; $A'$ can be created from $A$ by modifying  values of some of the singletons $A_3(x)$ for floor and ceiling entries $x \notin I(Q, A)$ to make unions of singletons in the floor identical to that of the ceiling. Note that as a result we also need to update the $A'_3$-values along the relevant paths; by definition of $I(Q, A)$, these paths do not intersect $Q$ so this is not a problem. Consequently, the array $A'$ satisfies $A'(x) = A(x)$ for any $x \in Q$, so querying the entries of $Q$ does not suffice to distinguish between $A$ and $A'$.    

By Lemma \ref{lem:small_expected_size} and Markov's inequality, any set $Q \subseteq [n]^d$ of size at most $ n^{d-1} / 22 d$ satisfies $|I(Q, A)| \leq n^{d-1}$ with probability bigger than $1/3$. Thus, any non-adaptive one-sided error test for $\mathcal{P}$ must make $ \Omega(n^{d-1} / d)$ queries.
\end{proof}

\begin{proof}[Proof of Theorem \ref{thm:two_sided_lower_bound}]
Let $\mathcal{D}$ be a distribution of $[n]^d$-arrays over $\Sigma$, where with probability $1/2$ the array $A$ is taken u.a.r. from $\mathcal{C}_{\acc}$, and otherwise, $A$ is taken u.a.r. from $\mathcal{C}_{\rej}$. We will prove that any two-sided test $T$ whose success probability over the distribution $\mathcal{D}$ is at least $2/3$ must make at least $O(n^{(d-1)/2} / d)$ queries. By Lemma \ref{lem:eps_far_distrib}, this suffices to prove Theorem \ref{thm:two_sided_lower_bound}.

For two arrays $A, A' \in \mathcal{C}_{\rej} \cup \mathcal{C}_{\acc}$ with the same centers of gravity $\ell$ and $u$, we say that $A$ and $A'$ are \emph{equivalent} if there exists a permutation $\pi$ on $[2n^{d-1}]$ so that $A'_3(x) = \pi(A_3(x))$ for any $x \in [n]^d$, where we define $\pi(X) = \{ \pi(x) : x \in X\}$ for any set $X \subseteq [n]^d$.
Clearly, $A \in \mathcal{C}_{\rej}$ if and only if $A' \in \mathcal{C}_{\rej}$, and the same holds with respect to $\mathcal{C}_{\acc}$.  
 
By the above symmetry, it suffices to only consider tests $T$ whose behavior is invariant under equivalence (i.e., accepts $A$ if and only if it accepts $A'$). Indeed, one can turn any valid two-sided test $T$ for the distribution $\mathcal{D}$ into a valid two-sided error non adaptive equivalence-invariant test $T'$ with the same number of queries by doing the following: First $T'$ applies a random permutation $\pi$ on $[2n^{d-1}]$ to all elements of all $A_3$-sets it queries, and then it returns the same answer that $T$ would have returned on the array with the permuted values.   
	
An important observation is that an equivalence-invariant test $T$ for $\mathcal{D}$ must decide whether to accept or reject based only on the following parameters:
\begin{itemize}
	\item The location of the centers of gravity (note that $T$ knows $\ell$ and $u$ after making its queries, and that they completely determine how the $A_3$-data flows).
	\item The total number $N(Q, A)$ of \emph{collision pairs} $(x, y)$ satisfying $A_3(x) = A_3(y)$,  where $x \in I(Q, A)$ is a floor entry and $y \in I(Q, A)$ is a ceiling entry.
\end{itemize} 
	
For the next step we utilize a birthday-paradox type argument. For an array $A \in \mathcal{C}_{\rej}$, $N(Q, A) = 0$ always holds, regardless of the locations of the centers of gravity. For $A \in \mathcal{C}_{\acc}$, the probability that $N(Q, A) > 0$ is bounded by $|I(Q, A)|^2 / n^{d-1}$, again regardless of the location of the centers of gravity. Note that this bound still holds if we restrict ourselves to a subset of $\mathcal{C}_{\acc}$ containing only arrays $A$ with a pre-specified lower center $\ell$; we shall use this fact later on.
 
Note that the distributions of the centers' locations in $\mathcal{C}_{\acc}$ and in $\mathcal{C}_{\rej}$ are identical: In both cases, the location of the lower center of gravity $\ell = (\ell_1, \ldots, \ell_d)$ is uniform among all entries satisfying $n/3 < \ell_1 \leq 2n/3 - 1$. 

We shall now finally prove by contradiction that $\Omega(n^{(d-1)/2} / d)$ queries are required for any equivalence-invariant two-sided error test $T$ on the distribution $\mathcal{D}$. By Yao's principle, we may assume that $T$ is deterministic, that is, it always makes the same set of queries $Q \subseteq [n]^d$, where $|Q| < c n^{(d-1)/2} / d$ for a small enough constant $c > 0$. 
Define $\mathcal{D}_{\ell}$ as the restriction of the distribution $\mathcal{D}$ to arrays whose lower center of gravity is $\ell$. When $T$ receives an array $A$ whose lower center is $\ell$, it must decide whether to accept or reject based only on $\ell$ and the value of $N(Q, A)$. However, by the above discussion, the probability that $N(Q,A) = 0$ for an array chosen from $\mathcal{D}_{\ell}$ is at least $\frac{1}{2} + \frac{1}{2} \left(1 - |I(Q, A)|^2  /n^{d-1}\right)$. Clearly, if we choose to accept when $N(Q, A) = 0$, then the error of our test restricted to $\mathcal{D}_{\ell}$ is at least $1/2$, since all instances in $\mathcal{D}_{\ell}$ that should be rejected are incorrectly accepted. We shall next show that for most choices of $\ell$, choosing to reject when $N(Q, A) = 0$ will also result in a large error, stemming from the fact that many should-be-accepted instances are incorrectly rejected.

Indeed, using Lemma \ref{lem:small_expected_size} and applying Markov's inequality twice, we conclude that for at least a $9/10$-fraction of the possible values of $\ell$ (we call these the \emph{good} values of $\ell$), at least a $19/20$-fraction of the arrays $A \in \mathcal{C}_{\acc}$ with lower center $\ell$ satisfy $|I(Q, A)| < n^{(d-1)/2} / 10$. Among arrays $A$ satisfying the last inequality, only a fraction of $|I(Q, A)|^2 / n^{d-1} \leq 1/100$ of the should-be-accepted $A$'s in $\mathcal{D}_{\ell}$ satisfy $N(Q, A) > 0$. 
For any good $\ell$, taking a union bound implies that $N(Q, A) = 0$ with probability $9/10$ among the accepting instances in $\mathcal{D}_{\ell}$. Thus, the error of choosing to accept when $N(Q, A) = 0$ when $\ell$ is good is at least $\frac{1}{2} \cdot \frac{9}{10} = \frac{9}{20}$. Therefore, the total error of $T$ on the distribution $\mathcal{D}$ is at least $9/20 - 1/10 > 1/3$, where the $1/10$ term corresponds to those $\ell$ values that are not good. This is a contradiction, since a valid two-sided error test must have error at most $1/3$ on $\mathcal{D}$.
\end{proof}

\subsection{Large $k$}
\label{subsec:large_k}
Efficiently generalizing the above construction to general $k$ is not difficult, giving the following bounds. 

\begin{thm}
\label{thm:lower_bound_large_k_fixed_eps}
	Fix $\epsilon \leq 1/4$. Any non-adaptive $\epsilon$-test for the $k$-local property $\mathcal{P}_k$ requires
	\begin{enumerate}
		\item $\Omega(k n^{d-1}/d)$	queries if the test has one-sided error.
		\item $\Omega(\sqrt{k} n^{(d-1)/2} / d)$ queries if the test has two-sided error.
	\end{enumerate}
\end{thm}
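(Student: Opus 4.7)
The plan is to extend the $k=2$ construction of Section~\ref{sec:lower_bound} so that each unit of ``floor information'' is effectively multiplied by $\Theta(k)$, while the overall property remains genuinely $k$-local. I would enlarge the alphabet to $\Sigma_k = [n]^d \times [n]^d \times (2^{[2kn^{d-1}]})^{k-1}$, so that each entry carries a self-pointer, a center-of-gravity pointer, and a $(k-1)$-tuple of sets corresponding to $k-1$ independent ``data channels''. Each floor and ceiling entry stores $k-1$ independent singletons (one per channel), and for each channel we run a shifted copy of the flow-to-center-of-gravity mechanism of Subsection~\ref{subsection:global_description}, with a per-channel matching constraint at the centers of gravity. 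The $k$-locality (rather than $2$-locality) would be enforced by encoding the channel index through a residue condition tied to the self-pointer coordinate, so that verifying consistency requires a full $k \times \ldots \times k$ window.

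Given this property $\mathcal{P}_k$, I would adapt the Yao-principle argument of Subsection~\ref{subsec:local_description}. The distributions $\mathcal{C}_{\acc}, \mathcal{C}_{\rej}$ are defined as the natural analogs: in $\mathcal{C}_{\rej}$ each channel's floor and ceiling aggregates are pairwise disjoint, while in $\mathcal{C}_{\acc}$ they coincide. The farness bound (analog of Lemma~\ref{lem:eps_far_distrib}) extends directly, since fixing even a single channel forces $\Omega(n^{d-1})$ modifications. The key intersection bound (analog of Lemma~\ref{lem:small_expected_size}) would state that $\mathbb{E}[|I(Q, A)|] = O(d|Q|)$, where now $|I(Q, A)|$ counts (channel, floor/ceiling-entry) pairs whose flow meets $Q$; the total across all $k-1$ channels remains $O(d|Q|)$ because the channels are routed along sufficiently disjoint paths.

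Combining: for the one-sided bound we need $|I(Q, A)| \geq (k-1)n^{d-1}$ (to cover all channels' singletons), giving $|Q| = \Omega(kn^{d-1}/d)$. For the two-sided bound, the birthday-paradox argument applies with the enlarged singleton universe of size $\Theta(kn^{d-1})$: the collision probability in $\mathcal{C}_{\acc}$ is $O(|I(Q, A)|^2 / (kn^{d-1}))$, so we need $|I(Q, A)| = \Omega(\sqrt{k}\, n^{(d-1)/2})$ and hence $|Q| = \Omega(\sqrt{k}\, n^{(d-1)/2}/d)$, as required.

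The main technical obstacle will be engineering the channel layout so that it satisfies two conflicting requirements simultaneously: (a) the property is genuinely $k$-local but not $(k-1)$-local, so that simply reusing the $k=2$ bound does not already apply; and (b) the $k-1$ channels are decorrelated enough that any single query lies on $O(d/k)$ flows per channel in expectation, giving the crucial total $O(d|Q|)$ bound. The naive approach of thickening the floor to $k-1$ consecutive rows fails condition (b), since all those flows cluster near the common centers of gravity and a single query intersects $\Omega(k)$ channels at once, which collapses the improvement. Resolving this appears to require a shifted-center scheme where each channel's center of gravity is offset by a distinct residue mod $(k-1)$, together with $k$-local rules that enforce the residue encoding via the self-pointer; verifying that this combined scheme really is $k$-local and that the expected-intersection bound survives the offsets is where most of the proof's effort will concentrate.
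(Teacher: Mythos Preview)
Your proposal has a genuine gap at exactly the point you flag as the main obstacle. The multi-channel scheme cannot achieve the claimed $O(d/k)$ flows-per-channel bound: each channel is an independent copy of the $k=2$ construction with a shifted center, and the argument of Lemma~\ref{lem:small_expected_size} applied to channel $c$ alone already gives $\mathbb{E}[|I_c(Q,A)|] = O(d|Q|)$, regardless of where that channel's center sits. Summing over $k-1$ channels yields $\mathbb{E}[|I(Q,A)|] = O(dk|Q|)$, so the one-sided argument collapses to $|Q| = \Omega\bigl((k-1)n^{d-1}/(dk)\bigr) = \Omega(n^{d-1}/d)$, with no gain in $k$; the two-sided bound degrades likewise. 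Offsetting the centers by distinct residues merely translates each channel's flow pattern; it does not thin the number of flows through a generic point.

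The paper's route is different and much simpler: it uses a single channel, thickens the floor and ceiling to $k/2$ rows each (so there are $\Theta(kn^{d-1})$ singletons), and makes the data flow along the first coordinate in \emph{jumps of size $k/2$} rather than unit steps. This one modification accomplishes everything at once. It makes the property genuinely $k$-local, since verifying a jump of $k/2$ (and the matching condition around the centers) needs a window of width $k$. It keeps the expected number of flows through any point at $O(d)$, because paths originating at floor heights in different residue classes modulo $k/2$ are disjoint along their first-coordinate segments, so Lemma~\ref{lem:small_expected_size} holds with the same constant, independent of $k$. And it raises the number of singletons a one-sided test must cover to $\Theta(kn^{d-1})$, giving $|Q| = \Omega(kn^{d-1}/d)$; for the two-sided bound, the birthday argument now runs over a universe of size $\Theta(kn^{d-1})$, yielding the extra $\sqrt{k}$ factor. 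The idea you are missing is to sparsify the paths themselves rather than to multiply them.
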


We may assume that $k$ is even and $k \leq n/6$. Since the main ideas and the proofs are almost identical to the case $k=2$, we only briefly sketch the main differences. 

The \emph{$r$-floor} of $[n]^d$ is defined as the set of all entries $(x_1, \ldots, x_d)$ with $1 \leq x_1 \leq r$, and the $r$-ceiling consists of all entries with $n-r+1 \leq x_1 \leq n$. Here we work with $r = k/2$.
We take $\Sigma = [n]^d \times [n]^d \times 2^{[kn^{d-1}]}$.

Recall the property $\mathcal{P}$ formally defined in Subsection \ref{subsection:global_description}. We now describe the lower bound property $\mathcal{P}_k$, generalizing the definition of the original property $\mathcal{P} = \mathcal{P}_2$.
As before, we read the value of an $[n]^d$-array $A$ in an entry $x \in [n]^d$ as a tuple $(A_1(x), A_2(x), A_3(x))$. An array $A$ satisfies $\mathcal{P}_k$ if the following conditions hold.
\begin{description}
	\item [Pointer to self] $A_1(x) = x$ for any $x \in [n]^d$.
	\item [Centers of gravity] defined as in the case $k=2$ and denoted $\ell = (\ell_1, \ldots, \ell_d)$ and $(u_1, \ldots, u_d)$. We additionally require that $k/2 \leq \ell_1 \leq n-k/2$, so $u_1 \leq n-k/2+1$. 
	\item [Pointer to lower center of gravity] for any $x \in [n]^d$, $A_2(x) = \ell$.
	\item [Data originates in $k/2$-floor and $k/2$-ceiling] For any entry $x$ in the $k/2$-floor or $k/2$-ceiling, $A_3(x)$ contains exactly one element from $[kn^{d-1}]$.
	\item [Data flows in jumps to center of gravity] The data flows in ``jumps'' of height $k/2$ in the first coordinate (in the rest of the coordinates the data flow is exactly as before).  More formally, we define the path $\Gamma_x$ from any $x$ not higher than the lower center of gravity $\ell$ as follows: 
	\begin{itemize}
		\item We start by moving from $x = (x_1, \ldots, x_d)$ upwards (in the first coordinate) in jumps of $k/2$. For example, the entry coming after $x$ in the path is $(x_1+k/2, x_2, \ldots, x_d)$. This part of the path ends when reaching an entry $y = (y_1, x_2, \ldots, x_d)$ with $\ell_1 - k/2 < y_1 \leq \ell_1$.
		\item The rest of the path continues like in the case $k=2$: For any $i=2,3,\ldots,d$ in this order, we continue the path by appropriately adding or subtracting $1$ from the $i$-th coordinate until its value reaches $\ell_i$.
	\end{itemize}
	The definitions regarding paths above the upper center of gravity $u = (u_1, \ldots, u_d)$ are symmetric; jumps of $k/2$ in the first coordinate are made downwards.
	\item [Same data around centers of gravity] We require that the union of the sets $A_{x}$ for $x = (\ell_1-i, \ell_2, \ldots, \ell_d)$, $0 \leq i \leq k/2 - 1$ is identical to the union of the sets $A_x$ for $x = (u_1+i, u_2, \ldots, u_d)$ for $0 \leq i \leq k/2-1$. This generalizes the ``centers of gravity have same data'' condition from the case $k=2$.
	
\end{description}

It is straightforward to verify that $\mathcal{P}_k$ is a $k$-local property, as was done in Subsection \ref{subsec:local_description} for $k=2$. 
To prove the lower bounds, we take a distribution similar to $\mathcal{D}$, but here the $k/2$-floor keeps $n^{d-1}k/2$ different values in its $A_3$-sets, and so does the $k/2$-ceiling. We take $\mathcal{C}_{\acc}$ as a collection of arrays where the union of the data in the $k/2$-floor is identical to the corresponding union in the $k/2$-ceiling. For the $\epsilon$-far $\mathcal{C}_{\rej}$ collection we take these sets of data to be disjoint. Lemma \ref{lem:small_expected_size} remains true as is (maybe with a modification in the constant, but without any dependence on $k$ in the bound). The rest of the proof is completely analogous, except that in the proof of the one-sided case the $n^{d-1}$ bound on the size of $I(Q, A)$ should be replaced with $n^{d-1}k/2$; and in the two-sided error case, the probability that $N(Q, A) > 0$ is now bounded by $2|I(Q,A)|^2 / n^{d-1}k$ instead of $|I(Q, A)|^2 / n^{d-1}$, allowing us to increase the size of $I(Q,A)$ (and consequently, of $Q$) by a $\Theta(\sqrt{k})$ factor.

\subsection{Small $\epsilon$}
\label{subsec:small_eps}
Proving the correct dependence in $\epsilon$ in the lower bound is very simple given the previous steps. We may assume that, say, $k \leq  \epsilon^{1/d} n / 5$. We partition $[n]^d$ into disjoint $d$-dimensional consecutive blocks, each of dimensions $n_1 \times n_2 \times \ldots \times n_d$ where $|n_i - \epsilon^{1/d} n| < 1$ for any $i$. There are $\Theta(1 / \epsilon)$ such blocks: recall that $\epsilon^{1/d} n \geq d$, so each such block is of size at least $\epsilon n^d \cdot (1-1/d)^d = \Omega(\epsilon n^d)$ and the $O(\epsilon n^d)$ upper bound is obtained similarly. 

Our lower bound distribution is as follows. We pick one of the blocks uniformly at random, and embed the lower bound construction of the last subsection, with parameters $n', d, k$ where the width is $n' = \lfloor \epsilon^{1/d} n \rfloor \geq d$, in the block. Note that $ \lfloor n' \rfloor \geq (1-1/d) \epsilon^{1/d} n $. It is easy to verify that for the one-sided error case, we need to make an expected number of $\Omega(kn'^{d-1} / d) = \Omega(k \epsilon^{1-1/d} n^{d-1} / d)$ queries in every block to test the property with constant success probability (where the expectation is over blocks). Since there are $\Theta(1/\epsilon)$ blocks, the total query complexity is $\Omega(k \epsilon^{-1/d} n^{d-1} / d )$ as desired.   
For the two-sided case, we need to make an expected number of $\Omega(\sqrt{k} n'^{(d-1)/2} / d) = \Omega(\sqrt{k} \epsilon^{(d-1)/2d} n^{(d-1)/2} / d)$ queries per block, and $\Omega(\sqrt{k} \epsilon^{(-d-1)/2d} n^{(d-1)/2} / d)$ in total.

\section{Lower bounds with smaller alphabet}
\label{section:lower_shrinking}

In this section we show how to slightly modify the lower bound construction of Section \ref{sec:lower_bound} in order to prove Theorem \ref{thm:main_lower_bound} with alphabet size that is polynomial in $n^d$.
We start by describing how to construct an ``alphabet-efficient'' analogue of the property $\mathcal{P}$ from Subsection \ref{subsection:global_description}, settling the case $k=2$. Then, we utilize ideas from Subsection \ref{subsec:large_k} to obtain a suitable construction for larger (even) values of $k$.

\subsection{$k=2$ and constant $\epsilon$}
\label{subsec:alphabet_efficient_simple}
Recall the lower bound property $\mathcal{P}$ presented in Subsection \ref{subsection:global_description}. The alphabet of the field $A_3(x)$ which indicates the ``data'' present at each point $x \in [n]^d$ is of size $2^{2n^{d-1}}$ (for the generalized property $\mathcal{P}_k$, the size increases to $2^{kn^{d-1}}$). We now show how to replace the ``data'' alphabet with an alternative one of size $n^{O(d)}$ without affecting the rest of the one-sided proof. For the two-sided case, the new proof does not work, and it remains open whether one can design a property that is hard to test non-adaptively with two sided error over an alphabet of reasonable size.

The main idea is to turn the data into counts of zeros and ones; the rest of the construction (in particular, the structure of flows) remains as before. Each floor and ceiling entry either contains a single zero or a single one (that is, its zeros/ones count is either 1/0 or 0/1), and the zero/one-count of each internal entry is the aggregation of the counts of all neighboring entries flowing into it.

Finally, the ``centers of gravity have the same data'' condition is replaced with a global condition on the 0/1-counts that is hard to check with one-sided error. For example, the condition that the total count of zeros is larger than the total count of ones. 

Formally, we make the following changes to the property from Subsection \ref{subsection:global_description}.
\begin{itemize}
\item The alphabet for $A_3$ is replaced with $\{0,1,\ldots, n^{d-1}\}^2$. For any $x$, $A_3(x)$ is viewed as the tuple containing the count of zeros $A^0_3(x)$ and ones $A^1_3(x)$, respectively, aggregated at $x$.
\item {\bf Data in floor and ceiling}: The condition ``data originates in floor and ceiling'' is replaced with the condition that $A_3(x) = (1,0)$ or $A_3(x) = (0, 1)$ for any floor and ceiling element. In other words, any floor or ceiling entry either holds a single occurrence of zero and no occurrences of one, or vice versa.
\item {\bf Data flow}: The condition regarding flow of data to the center of gravity remains like in $\mathcal{P}$, but we replace the requirement that $A_3(y) = \bigcup_{x \in [n]^d \colon \Gamma_{x} \ni y} A_3(x)$ in the original construction with the conditions $A^0_3(y) = \sum_{\Gamma_x \ni y} A^0_3(x)$ and $A^1_3(y) = \sum_{\Gamma_x \ni y} A^1_3(x)$. In other words, $A_3(y)$ aggregates the counts from $A_3(x)$ for all values of $x$ flowing to $y$.
\item {\bf More zeros than ones}: The condition ``centers of gravity have same data'' is replaced with the condition that the total count of zeros in the centers of gravity is larger than the total count of ones in them. 
\end{itemize}

Note that the property is still $2$-local (the modifications we made do not affect the locality). 
The collection $\mathcal{C}_{\text{rej}}$ we take consists of all arrays that satisfy all conditions except for the ``more zeros and ones'' condition, that is replaced by the requirement that the total count of zeros in the centers of gravity is exactly $\lfloor n^{d-1} / 4 \rfloor$, and additionally, $n/3 < \ell_1 < u_1 \leq 2n/3$. Clearly, all arrays in $\mathcal{C}_{\text{rej}}$ are $\Omega(1)$-far from satisfying the property.
The collection $\mathcal{C}_{\text{acc}}$ is defined similarly, but we replace the required count of zeros to $\lceil 3n^{d-1} / 4 \rceil$. Clearly, all arrays in the collection satisfy the property.

The rest of the proof follows in a straightforward manner.
In Lemma \ref{lem:small_expected_size}, we replace the quantity ``expected size of $I(Q, A)$'' with the quantity $C(Q, A) = \sum_{x \in Q} A^0_3(x) + A^1_3(x)$. The exact same proof gives that the expected value of $C(Q, A)$ is bounded by $7 d|Q| $ for any $Q \subseteq [n]^d$. Since we must have $C(Q, A) \geq \lfloor n^{d-1} / 2 \rfloor$ in order to distinguish (with one sided error) between $\mathcal{C}_{\text{acc}}$ and $\mathcal{C}_{\text{rej}}$, it follows (as in the original proof) that we must have $|Q| = \Omega(n^{d-1} / d)$, concluding the proof.

\subsection{General $k$ and $\epsilon$}

We now describe the ``alphabet-efficient'' lower bound construction for a general even $k \geq 2$ and constant $\epsilon$. The case of small $\epsilon$ is handled exactly as in Subsection \ref{subsec:small_eps}.

First, the alphabet of $A_3$ is taken as $\{0,1,\ldots,n^{d-1}\}^2$ (similarly to Subsection \ref{subsec:alphabet_efficient_simple}), where for any $x$, $A_3(x) = (A^0_3(x), A^1_3(x))$ is viewed as the aggregated count of zeros and ones, respectively, at $x$.

The first three conditions from Subsection \ref{subsec:large_k} are taken as is. The fourth condition is replaced by the requirement that for any $x$ in the $k/2$-floor or $k/2$-ceiling, either $A_3(x) = (1,0)$ which means that $x$ holds a single occurrence of zero, or $A_3(x) = (0,1)$, meaning that $A_3(x)$ holds a single occurrence of one.

For the fifth condition, the structural behavior of the data flow is exactly as in Subsection \ref{subsec:large_k}: data flows in ``jumps'' of $k/2$ along the first coordinate, and in single steps along the other coordinates. For each $y \in [n]^d$ we require, as in Subsection \ref{subsec:alphabet_efficient_simple}, that $A_3(y)$ aggregates the $0/1$-counts from $A_3(x)$ for all entries $x$ flowing into $y$.

The condition of ``same data around centers of gravity'' from Subsection \ref{subsec:large_k} is replaced with a suitable ``more zeros than ones'' condition: we require that $\sum_{x} A^0_3(x) > \sum_{x} A^1_3(x)$ where $x$ ranges over the set $\{(\ell_1-i, \ell_2, \ldots, \ell_d) : 0 \leq i \leq k/2 - 1\} \cup \{(u_1+i, u_2, \ldots, u_d): 0 \leq i \leq k/2 - 1\}$.
 
As a natural extension of Subsections \ref{subsec:large_k} and \ref{subsec:alphabet_efficient_simple}, it is straightforward to verify that the property described here is $k$-local, and that the rest of the proof (for the one-sided error setting) follows essentially as above. 
 

\section*{Acknowledgments}
The author would like to thank Frederik Benzing, Eric Blais, Eldar Fischer, Sofya Raskhodnikova, Daniel Reichman and C. Seshadhri for stimulating discussions, and the anonymous reviewers for helpful suggestions.

\end{document}